\title{Probabilistic Analysis of Onion Routing in a Black-box Model}
\author{JOAN FEIGENBAUM \affil{Yale University} AARON JOHNSON
  \affil{U.S. Naval Research Laboratory} PAUL SYVERSON \affil{U.S. Naval
    Research Laboratory}
}
\begin{abstract}
We perform a probabilistic analysis of onion routing.  The analysis is presented in a black-box model of anonymous communication in the Universally Composable framework that abstracts the essential properties of onion routing in the presence of an active adversary that controls a portion of the network and knows all \emph{a priori} distributions on user choices of destination.  Our results quantify how much the adversary can gain in identifying users by exploiting knowledge of their probabilistic behavior.  In particular, we show that, in the limit as the network gets large, a user $u$'s anonymity is worst either when the other users always choose the destination $u$ is least likely to visit or when the other users always choose the destination $u$ chooses.  This worst-case anonymity with an adversary that controls a fraction $b$ of the routers is shown to be comparable to the best-case anonymity against an adversary that controls a fraction $\sqrt{b}$.
\end{abstract}
\keywords{anonymous communication, onion routing, Tor}
\begin{document}
\begin{bottomstuff}
Joan Feigenbaum (email: Joan.Feigenbaum@yale.edu) was supported in part by NSF grants 0331548 and 0534052, ARO grant W911NF-06-1-0316, and US-Israeli BSF grant 2002065.
Aaron Johnson (email: aaron.m.johnson@nrl.navy.mil) did the majority of this work while at Yale University and was supported by NSF grant 0428422 and ARO grant W911NF-05-1-0417. Some work was also done while at The University of Texas at Austin. 
Paul Syverson (email: syverson@itd.nrl.navy.mil) was supported by ONR.

An extended abstract of this paper appears in the Proceedings of the 2007 ACM Workshop on Privacy in the Electronic Society.
\end{bottomstuff}
\maketitle

\section{Introduction}
Every day, half a million people use the onion-routing network Tor~\cite{tor-design} to anonymize their Internet communication.  However, the effectiveness of this service, and of onion routing in general, is not well understood.  The approach we take to this problem is to model onion routing formally all the way from the protocol details to the behavior of the users.  We then analyze the resulting system and quantify the anonymity it provides.  Key features of our model include $i)$ a black-box abstraction in the Universally Composable (UC) framework~\cite{cryptoeprint:2000:067} that hides the underlying operation of the protocol and $ii)$ probabilistic user behavior and protocol operation.


Systems for communication anonymity generally have at most one of two desirable properties: provable security and practicality.  Systems that one can prove secure have used assumptions that make them impractical for most communication applications.  Practical systems are ultimately the ones we must care about, because they are the ones that will actually be used. However, their security properties have not been rigorously analyzed or even fully stated. This is no surprise, because practical anonymity systems have been deployed and available to study for perhaps a decade, while practical systems for communications confidentiality and/or authenticity have been in use almost as long as there have been electronic communications. It often takes a while for theory and practice to catch up to each other.

Of the many anonymous-communication design proposals (\emph{e.g.} \cite{chaum-mix,CHAUM,crowds:tissec,buses03,Salsa,Corrigan-Gibbs:2010:DAA:1866307.1866346}), onion routing
\cite{onion-routing:ih96} has had notable success in practice.
Several implementations have been made
\cite{onion-routing:ih96,onion-routing:pet2000,tor-design}, and there
was a similar commercial system, Freedom \cite{freedom1-arch}.  As of
October 2011, Tor \cite{tor-design}, the most recent iteration of the
basic design, consists of about 3000 routers, provides a total
bandwidth of over 1000 MB/s, and has an estimated total user
population of about 500,000~\cite{tor-metrics}.  Because of this
popularity, we believe it is important to improve our understanding of
the protocol.

Onion routing is a practical anonymity-network scheme with relatively low overhead and latency.  Users use a dedicated set of \emph{onion routers} to forward their traffic, obscuring the relationship between themselves and their destinations.  To communicate with a destination, a user selects a sequence of onion routers and constructs a \emph{circuit}, or persistant connection, over that sequence.  Messages to and from the destination are sent over the circuit.  Onion routing provides two-way, connection-based communication and does not require that the destination participate in the anonymity-network protocol.  These features make it useful for anonymizing much of the communication that takes place over the Internet today, such as web browsing, chatting, and remote login.  Thus, formal analysis and provable anonymity results for onion routing are significant.

As a step toward the overall goal of bridging the gap between provability and practicality in anonymous-communication systems, we have formally modeled and analyzed \emph{relationship anonymity} \cite{terminology,ShWa-Relationship} in Tor. Although this provides just a small part of the complete understanding of practical anonymity at which our research program is aimed, already it yields nontrivial results that require delicate probabilistic analysis. We hope that this aspect of the work will spur the Theoretical Computer Science community to devote the same level of attention to the rigorous study of anonymity as it has to the rigorous study of confidentiality. 

\subsection{Summary of Contributions}

{\bf Black-box abstraction:}
In the present paper, we treat the network simply as a ``black box''\footnote{We note that our use of a ``black box'' is slightly different than the more common uses in the literature. Black-box access to some cryptographic primitives is commonly used as a starting point to achieve some other desired functionality. Here we show how, for purposes of anonymity analysis, we need only consider a black-box abstraction.} to which users connect and
through which they communicate with destinations.  The abstraction
captures the relevant properties of a protocol execution that the
adversary can infer from his observations - namely, the observed
users, the observed destinations, and the possible connections between
the two.  In this way, we abstract away from much of the design
specific to onion routing so that our results apply both to onion
routing and to other low-latency anonymous-communication designs. 
We express the black-box model within the Universally Composable (UC) security
framework~\cite{cryptoeprint:2000:067}, which is a standard way to express the  
function and security properties of cryptographic protocols. We tie our functionality to 
the guarantees of an actual protocol by showing it reveals as much information about users' 
communication as the onion routing protocol we formalized~\cite{FC07} in an I/O-automata model. Moreover, we discuss how the functionality might be emulated by a protocol within the UC framework itself.

{\bf Probabilistic model:}
Our previous analysis in the I/O-automata model was possibilistic, a
notion of anonymity that is simply not sensitive enough. It makes no
distinction between communication that is equally likely to be from
any one of a hundred senders and communication that came from one
sender with probability $.99$ and from each of the other 99 senders
with probability $.000101$.  An adversary in the real world is likely
to have information about which scenarios are more realistic than
others.  In particular, users' communication patterns are not totally
random.  When the adversary can determine with high probability,
\emph{e.g.}, the sender of a message, that sender is not anonymous in
a meaningful way.

Using this intuition, we include a probability measure in our black-box model.  For any set of actual sources and destinations, there is a larger set that is consistent with the observations made by an adversary.  The adversary can then infer conditional probabilities on this larger set using the measure.  This gives the adversary probabilistic information about the facts we want the network to hide, such as the initiator of a communication.


In the probability measure that we use, each user chooses a destination according to some probability distribution.  We model heterogeneous user behavior by  allowing this distribution to be different for different users.  We also assume that the users choose their circuits by selecting the routers on it independently and at random.


After observing the protocol, the adversary
can in principle infer some distribution on circuit source and
destination.  He may not actually know the underlying probability
measure, however.  In particular, it doesn't seem likely that the
adversary would know how every user selects destinations.  In our
analysis, we take a worst-case view and assume that the adversary
knows the distributions exactly.  Also, over time he might learn a
good approximation of user behavior via the long-term intersection
attack \cite{DanSer04}.  In this case, it may seem as though anonymity
has been essentially lost anyway.  However, even when the adversary
knows how a user generally behaves, the anonymity network may make it
hard for him to determine who is responsible for any specific action,
and the anonymity of a specific action is what we are interested in.

{\bf Anonymity metric:} We analyze \emph{relationship anonymity} \cite{terminology,ShWa-Relationship} in our onion routing model.  Relationship anonymity is obtained when the adversary cannot identify the destination of a user.  In terms of the conventional subject/action specification for anonymity \cite{terminology}, we can take the action to be communication from a given user $u$ and the subject to be the destination.  Suggested probabilistic metrics for anonymity applied to this case include probability assigned to the correct destination \cite{crowds:tissec}, the entropy of the destination distribution \cite{Diaz02,Serj02}, and maximum probability within the destination distribution \cite{TOTH}, where the distribution in each case is a conditional distribution given the adversary's view.  We will use the probability assigned to the correct destination as our metric.  In part, this is because it is the simplest metric.  Also, any statements about entropy and maximum probability metrics only make loose guarantees about the probability assigned to the actual subject, a quantity that clearly seems important to the individual users.

We look at the value of this anonymity metric for a choice of
destination by a user.  Fixing a destination by just one user, say
$u$, does not determine what the adversary sees, however.  The
adversary's observations are also affected by the destinations chosen
by the other users and the circuits chosen by everybody.  Because
those variables are chosen probabilistically under the measure we
added, the anonymity metric will have its own distribution.  Several
statistics about this distribution might be interesting; in this
paper, we look at its expectation. Unlike other common anonymity
metrics, our approach lets a user judge how secure he can expect a
specific communication activity to be and thus whether to do it or
not.

{\bf Bounds on Anonymity:}

The distribution of the anonymity metric for a given user and destination depends on the other users' destination distributions.  If their distributions are very different from that of the given user, the adversary may have an easy time separating out the actions of the user.  If they are similar, the user may more effectively hide in the crowd. We provide the following results on a user's anonymity and its dependence on other user behavior:

\begin{enumerate}
\item We show that a standard approximation to our metric
provides a lower bound on it (Thm.~\ref{thm:lwrbnd}).
\item We show that the worst case for anonymity over other users' behavior is
when every other user either always visits the destinations the user
is otherwise least likely to visit or always visits his actual
destination (Cor.~\ref{cor:worst}). The former will be the worst case in most situations.
\item We give an asymptotic expression for our metric in the worst cases (Thm.~\ref{thm:de}). The limit of this expression in the 
most common worst case with an adversary controlling a fraction $b$ of the network is equal to the lower bound on the metric when the adversary controls a larger fraction $\sqrt{b}$ of the network. This is significantly worse than the standard analysis suggested, and shows the importance of carefully considering the adversary's knowledge of the system.
\item We consider anonymity in a more typical set of user distributions in which each user selects a destination from a common Zipfian distribution.  Because the users are identical, every user hides well among the others, and we show that, as the user population grows, the anonymity approaches the lower bound (Thm.~\ref{sec:typdist}). This shows you may be able to use the standard approximation with accurate results if you are able to make assumptions about user behavior.
\end{enumerate}

\subsection{Related Work}

Ours is not the first formalization of anonymous communication. Early formalizations used communicating sequential processes~\cite{schneider96}, graph theory and possible worlds~\cite{modular-approach}, and epistemic logic~\cite{GROUP,halpern-oneill-2003}.  These works focused primarily on formalizing the high-level concept of anonymity in communication. For this reason, they applied their formalisms to toy examples or systems that are of limited practical application and can only provide very strong forms of anonymity, \emph{e.g.}, dining-cryptographers networks.  Also, with the exception of \citeN{halpern-oneill-2003}, they have at most a limited ability to represent probability and probabilistic reasoning. We have focused in \citeN{FC07} on formalizing a widely deployed and used, practical, low-latency system.

Halpern and O'Neill~\shortcite{halpern-oneill-2003} give a general formulation of anonymity in systems that applies to our model.  They describe a ``runs-and-systems'' framework that provides semantics for logical statements about systems.  They then give several logical definitions for varieties of anonymity.  It is straightforward to	apply this framework to the network model and protocol that we give in~\cite{FC07}.  Our possibilistic definitions of sender anonymity, receiver anonymity, and relationship anonymity then correspond to the notion of ``minimal anonymity'' as defined in their paper.  The other notions of anonymity they give are generally too strong and are not achieved in our model of onion routing.

Later formalizations of substantial anonymous communication
systems~\cite{camlys05,MAUW,Wikstrom04} have not been directly based on the
design of deployed systems and have focused on provability without
specific regard for applicability to an implemented or implementable
design. Also, results in these papers are for message-based
systems: each message is constructed to be processed as a
self-contained unit by the appropriate router, typically using the
generally available public encryption key for that router. Such
systems typically employ mixing, changing the appearance and
decoupling the ordering of input to output messages at the router to
produce anonymity locally~\cite{chaum-mix}.  Onion routing, on the
other hand, is circuit based: before passing any messages with user
content, onion routing first lays a circuit through the routers that
provides those routers the keys to be used in processing the actual
messages. Mixing can be combined with onion routing in various
ways~\cite{onion-routing:jsac98}, although this is not
typical~\cite{tor-design}. Such circuit creation facilitates
bidirectional, low-latency coommunication and has been an identifying
feature of onion routing since the first public use of the
phrase~\cite{onion-routing:ih96}. Thus, while illuminating and
important works on anonymous communication, the formalizations
above are not likely to be applicable to low-latency communications, and,
despite the title of \cite{camlys05}, are not analyses of onion
routing.

In this paper, we add probabilistic analysis to the framework of \citeN{FC07}. Other works have presented probabilistic analysis of anonymous communication~\cite{crowds:tissec,SHMAT,WRIGHT,statistical-disclosure,DanSer04,e2e-traffic,stop-and-go} and even of onion routing~\cite{onion-routing:pet2000}.  The work of Shmatikov and Wang~\shortcite{ShWa-Relationship} is particularly similar to ours.  It calculates relationship anonymity in mix networks and incorporates user distributions for selecting destinations.  However, with the exception of~\cite{SHMAT}, these have not been formal analyses. Also, whether for high-latency systems such as mix networks, or low-latency systems, such as Crowds and onion routing, many of the attacks in these papers are some form of intersection attack. In an intersection attack, one watches repeated communication events for patterns of senders and receivers over time. Unless all senders are on and sending all the time (in a way not selectively blockable by an adversary) and/or all receivers receiving all the time, if different senders have different receiving partners, there will be patterns that arise and eventually differentiate the communication partners. It has long been recognized that no system design is secure against a long-term intersection attack. Several of these papers set out frameworks for making that more precise. In particular, \cite{statistical-disclosure}, \cite{DanSer04}, and \cite{e2e-traffic} constitute a progression towards quantifying how long it takes (in practice) to reveal traffic patterns in realistic settings.

We are not concerned herein with intersection attacks. We are effectively assuming that the intersection attack is done. The adversary already has a correct distribution of a user's communication partners.  We are investigating the anonymity of a communication in which a user communicates with one of those partners in the distribution.  This follows the anonymity analyses performed in much of the literature \cite{stop-and-go,MAUW,crowds:tissec,onion-routing:pet2000}, which focus on finding the source and destination of an individual communication.  Our analysis differs in that we take into account the probabilistic nature of the users' behavior.

We expect this to have potential practical applications. For example, designs for shared security-alert repositories to facilitate both forensic analysis for improved security design and quicker responses to widescale attacks have been proposed~\cite{LINCOLN04}.  A participant in a shared security-alert repository might expect to be known to communicate with it on a regular basis. Assuming reports of intrusions, etc., are adequately sanitized, the concern of the participant should be to hide when it is that updates from that participant arrive at the repository, \emph{i.e.}, which updates are likely to be from that participant as opposed to others.

\section{Technical Preliminaries}
\subsection{Model}
We describe our analysis of onion routing in terms of an ideal functionality in the Universal Composability framework \cite{cryptoeprint:2000:067} 
We use such a functionality for
three reasons: First, it abstracts away the details that aren't relevant to anonymity,
second, it precisely expresses the cryptographic protocol properties that are necessary for our analysis to apply, and
third, it immediately suggests ways to perform similar
analyses of other anonymous-communication protocols that may not strictly provide this functionality.

In the onion routing protocol on which we base our model, users choose
from a generally known set of onion routers a subset that will
comprise a circuit for communicating anonymously.  Circuit
construction has been done in various ways throughout the history of
onion routing.  In the first version of onion
routing~\cite{onion-routing:ih96}, and other early
versions~\cite{onion-routing:jsac98,freedom1-arch}, after a user selects a
sequence of onion routers from a publicly-known set, the user then creates a
circuit through this sequence using an \emph{onion}, a data structure
effectively composed only of layers with nothing in the middle. There
is one public-key-encrypted layer for each hop in the circuit, the
decryption of which contains the identity of the next hop in the
circuit (if there is one) and keying material for passing data over
the established circuit. In later protocols, such as used in
Cebolla~\cite{cebolla} and Tor~\cite{tor-design}, the circuit is built
via a telescoping protocol that extends the circuit hop-by-hop, using
the existing circuit for each extension. For all of these, each hop
only communicates with the routers before and after it in the
sequence, and the messages are encrypted once for each router in the
circuit so that no additional information leaks about the identities
of the other routers or the destination of the circuit.
Cryptographic techniques are used so that message forgery is
countered. Some later designs returned to the non-interactive
circuit construction of the
original~\cite{overlier:pet2007,pairing:pet2007}. It is trivial to see
that all of these fit directly within our model.

Some versions of onion routing, such as those that do iterative
discovery of onion routers via a
DHT~\cite{tarzan:ccs02,ccs09-shadowwalker,ccs09-torsk}, will not fit
within our model without some extensions that we do not pursue herein.
This is because the probability of first-last router choice and router
compromise within a circuit can no longer be assumed to be
independent. Some anonymity protocols that do not use onion routing
may nonetheless also fit within our model, appropriately extended.
For example, in Crowds~\cite{crowds:tissec}, the adversary can learn
from observing the first and last routers, but the connection to the
first router does not automatically identify the source. On the other
hand the destination is always know to every router in the
circuit. The probability that an observed circuit predecessor is the
source can thus be combined with the observed destination and the a
priori source-destination probability distribution.

The adversary is computationally bounded, non-adaptively compromises
an unknown subset of the onion routers, and can actively attack the
protocol. The design of our functionality is based on the assumption that the
ways that the adversary can narrow down the possible mappings of users
to destinations is determined by the set of circuits for which he
controls the first router and the set of circuits for which he
controls the last router. This assumption comes from the results of \citeN{FC07},
which we explicitly relate to our ideal functionality in Sec.~\ref{subsec:fc2uc}.


Our ideal functionality models anonymous communication over some period of time. It takes as input from each user the identity of a destination. For every such connection between a user and destination, the functionality may reveal to the adversary identity of the user, the identity of the destination, or both. Revealing the user corresponds in onion routing to the first router in the circuit being compromised, and revealing the destination corresponds to the last router being compromised. We note that we include only information flow to the adversary in this functionality rather than try to capture the type of communication primitive offered by onion routing because our focus is analyzing anonymity rather than defining a useful anonymous-communication functionality. This model is reminiscent of the general model of anonymous communication used by Kesdogan et al.~\shortcite{limits-open} in their analysis of an intersection attack. However, we do make a few assumptions that are particularly appropriate for onion routing.

First, the functionality allows the adversary to know whether or not he has directly observed the user. This is valid under the assumption that the initiating client is not located at an onion router itself. This is the case for the vast majority of circuits in Tor and in all significant deployments of onion routing and similar systems to date. We discuss this assumption further in Section~\ref{conclusions}.

Second, we assume that every user is
responsible for exactly one connection in a round.  Certainly users can
communicate with multiple destinations simultaneously in actual
onion-routing systems.  However, it seems likely that in practice most
users have at most some small (and fixed-bound) number of active
connections at any time. To the extent that multiple connections are
only slightly more likely to be from the same user than if all
connections were independently made and identically distributed, this
is a reasonable approximation. This is increasingly true as the
overall number of connections grows. To the extent that multiple
connections are less likely to be from the same user this is a
conservative assumption that gives the adversary as much power to
break anonymity as the limited number of user circuits can provide.

Third, the functionality omits the possibility that the adversary observes the user and destination but does not recognize that 
those observations are part of the some connection. This is another
conservative assumption that is motivated by the existence of timing
attacks that an active adversary can use to link traffic that it sees
at various points along its path through the network
\cite{onion-routing:pet2000}.  In a timing attack, the adversary
observes the timing of the messages going into the onion-routing
network and matches them to similar patterns of messages coming out of
the onion-routing networks slightly later.  Such attacks have been
experimentally demonstrated \cite{hs-attack06,bauer:wpes2007} and are
easy to mount.

Note that our model does not capture several known attacks on
anonymity in onion routing.  In particular, it does not include
attacks exploiting resource interference \cite{torta05,ccs06-hotclockskew},
heterogeneity on network latency \cite{tissec-latency-leak},
correlated destinations between rounds, and identifying patterns of
communication \cite{ccsw09-fingerprinting}.  We do not include such
attacks primarily to focus on the most important threats to anonymity,
because many of the omitted attacks are attacks on underlying systems
rather than on the protocol (e.g., interference) or have limited
effectiveness or are mitigated by improvements to the protocol.  Also,
we see the analysis of our simplified model as a first step in
establishing rigorous guarantees of anonymity in increasingly
realistic models.


Let $U$ be the set of users with $|U|=n$.  Let $\Delta$ be the set of destinations. Let $R$ be the set of onion routers. Let $\mathcal{F}_{OR}$ be the ideal functionality. $\mathcal{F}_{OR}$ takes the set $A$ of compromised parties from the adversary at the beginning of the execution. Let $b = |A\cap R|/|R|$. When user $u$ forwards his input from the environment to $\mathcal{F}_{OR}$, the functionality checks to see if it is some $d\in \Delta$. If so, $\mathcal{F}_{OR}$ sends to the adversary one of the following, choosing each with the probability given:

\begin{tabular}{ll}
\centering
(1) & $(\bot,\bot)$ with probability $(1-b)^2$\\
(2) & $(u, \bot)$ with probability $b (1-b)$\\
(3) & $(\bot, d)$ with probability $(1-b) b$\\
(4) & $(u,d)$ with probability $b^2$.
\end{tabular}



To analyze the anonymity provided by the ideal functionality, we make two assumptions about the inputs from the environment. First, we assume that the environment selects the destination of user $u$ from a distribution $p^u$ over $\Delta$, where we denote the probability that $u$ chooses $d$ as $p^u_d$. Second, we assume that the environment sends a destination to each user. Note that these assumptions need not be made when showing that a protocol UC-emulates $\mathcal{F}_{OR}$. 

We refer to the combination of the adversary model, the assumptions about the environment, and the ideal functionality as the \emph{black-box model}. Let $C$ be the relevant \emph{configuration} resulting from an execution. $C$ includes a selection of a destination by each user, $C_D : U\rightarrow \Delta$, a set of users whose inputs are observed, $C_I : U\rightarrow \{0,1\}$, and a set of users whose outputs are observed, $C_O : U\rightarrow \{0,1\}$. A user's input, output, and destination will be called its \emph{circuit}.

For any configuration, there is a larger set of configurations that are consistent with the outputs that the adversary receives from $\mathcal{F}_{OR}$.  We will call two configurations \emph{indistinguishable} if the sets of inputs, outputs, and links between them that the adversary receives are the same.
We use the notation $C\approx \overline{C}$ to indicate that configurations $C$ and $\overline{C}$ are indistinguishable.

\subsection{Probabilistic Anonymity}
A user performs an action anonymously in a possibilistic sense if there is an indistinguishable configuration in which the user does not perform the action.  For example, under this definition a user with observed output but unobserved input sends that output anonymously if there exists another user with unobserved input.  The probability measure we have added to configurations allows us to incorporate the degree of certainty that the adversary has about the subject of an action.  After making observations in the actual configuration, the adversary can infer a conditional probability distribution on configurations.  There are several candidates in the literature for assessing an anonymity metric from this distribution.  The probabilistic anonymity metric that we use is the posterior probability of the correct subject.  The lower this is, the more anonymous we consider the user.

\subsection{Relationship Anonymity}
We analyze the \emph{relationship anonymity} of users and destinations in our model, that is, how well the adversary can determine if a user and destination have communicated.  Our metric for the relationship anonymi\-ty of user $u$ and destination $d$ is the posterior probability $\psi$ that $u$ chooses $d$ as his destination.  We study $\psi$ directly, although the \emph{anonymity} of a user's communication with a destination is $1-\psi$.

Using the posterior probability makes sense in this context because it
focuses on the information that users are trying to hide---their
actual destinations---without being affected by information the
adversary learns about other destinations.  Onion routing does leak
information, and using a metric such as the entropy of the posterior
distribution or the statistical distance from the prior may not give a
good idea of how well the adversary's can correctly guess the user's
behavior.  Designers may wish to know how well a system protects
communications on average or overall. But it is also important for a
user to be able to assess how secure he can expect a particular
communication to be in order to decide whether to create it or
not. This is the question we address. Moreover, the metric is
relatively simple to analyze. Furthermore, to the extent that the user
may not know how he fits in and thus wishes to know the worst risk for
any user, that is just a lower bound on our metric.

The relationship anonymity of $u$ and $d$ varies with the destination
choices of the other users and the observations of the adversary.  If,
for example, $u$'s output is observed, and the inputs of all other
users are observed, then the adversary knows $u$'s destination with
probability 1.  Because we want to examine the relationship anonymity
of $u$ conditioned only on his destination, we end up with a
distribution on the anonymity metric.  We look at the expectation of
this distribution.  Moreover, because this distribution depends on the
destination distributions of all of the users, we continue by finding
the worst-case expectation in the limit for a given user and
destination and then examine the expectation in a more likely
situation.


\subsection{Emulating the Ideal Functionality} \label{subsec:fc2uc}
The anonymity analysis of the ideal functionality $\mathcal{F}_{OR}$ that we perform in Sections~\ref{sec:expanon} and \ref{sec:typdist} is meaningful to the extent that $\mathcal{F}_{OR}$ captures the information that an adversary can obtain by interacting with onion-routing protocols. We justify the functionality primarily by showing that it provides the same information about the source of a given connection as onion-routing as formalized by \citeN{FC07}. Furthermore, towards a more standard cryptographic analysis, we describe the way in which it should be possible to UC-emulate $\mathcal{F}_{OR}$, although we do not provide such a result here.

{\bf Relationship to I/O-automata model}
\citeN{FC07} formalize onion routing using an I/O-automata model\cite{LYNCH} and an idealization of the cryptographic properties of the protocol. Their analysis identifies the user states that are information-theoretically indistinguishable. The black-box model we provide herein is a valid abstraction of that formalization because, under some reasonable probability measures on executions, it preserves the relationship-anonymity properties.

The I/O-automata model includes a set of users $U$, a set of routers $R$, an adversary $A\subseteq R$, and a set of destinations $\Delta$, where we take the final router in the I/O-automata model to be the destination and assume that it is uncompromised.  A configuration $C$ in the I/O-automata model is a mapping from each user $u\in U$ to a circuit $(r^u_1,\ldots ,r^u_l)\in R^l$, a destination $d^u\in \Delta$, and a circuit identifier $n^u\in \mathbb{N}_+$.  An \emph{execution} is a sequence of I/O-automaton states and actions, which must be consistent with the configuration.

Let users in the I/O-automata model choose the other routers in their circuits uniformly at random and choose the destination according to user-specific distributions.  Given these circuits and a set of adversary automata, \citeN{FC07} identifies an equivalence class of circuit and destination choices such that, for every pair of configurations in the class, a bijection exists between their executions such that paired executions are indistinguishable.  Let the indistinguishable executions thus paired have the same probability, conditional on their configuration.

Given this measure, the black-box model that abstracts the I/O-automata model has the same user set $U$, the same destination set $\Delta$, an adversary parameter of $b = |A|/|R|$, and the same destination distributions.  The following theorem shows that each posterior distribution on the destinations of users has the same probability under both the I/O-automata model and its black-box model.  Let $E$ be a random I/O-automata execution.  Let $X^a$ be a random I/O-automata configuration ($X^a$ can be viewed as a function mapping a random execution to its configuration).  Let $X^b$ be a random black-box configuration.  Let $\psi_1(u,d,E)$ be the posterior probability that $u$ visited $d$ in the I/O-automata model, \emph{i.e.}, the conditional given that the execution is indistinguishable from $E$.  Let $\psi_2(u,d,X^b)$ be the posterior probability that $u$ visited $d$ in the black-box model, \emph{i.e.}, the conditional distribution given that the configuration is indistinguishable from $X^b$.  Let $\psi_0(u,d)$ be a distribution over destinations $d$ for every $u$.
\begin{theorem}
\begin{equation*}
Pr[\forall_{u\in U, d\in \Delta} \psi_1(u,d,E) = \psi_0(u,d)] = Pr[\forall_{u\in U, d\in \Delta} \psi_2(u,d,X^b) = \psi_0(u,d)]
\end{equation*}
\end{theorem}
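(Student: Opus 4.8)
The plan is to reduce both posteriors to a computation over a common \emph{observable projection} of a configuration and then to exhibit a measure-preserving correspondence between the two models at the level of that projection. For an I/O-automata configuration, define its observable data to record, for each user $u$, whether the first router $r^u_1$ lies in $A$ (the input is observed), whether the last router $r^u_l$ lies in $A$ (the output is observed), and, when the latter holds, the destination $d^u$. This is exactly the data $(C_I, C_O, C_D)$ carried by a black-box configuration, so there is a natural projection $\pi$ sending a random execution $E$ to a black-box configuration via $C_I(u) = \mathbf{1}[r^u_1 \in A]$, $C_O(u) = \mathbf{1}[r^u_l \in A]$, and $C_D(u) = d^u$. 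I would prove the theorem by establishing two facts: (i) in each model the posterior is a single deterministic function $g$ of the observable data, with $g$ the same for both; and (ii) $\pi$ pushes the I/O-automata measure forward to the black-box measure on observable data. Given (i) and (ii), the event $\{\forall u, d:\ \psi_i(u,d,\cdot) = \psi_0(u,d)\}$ is in each model the preimage under the observable projection of the fixed set $g^{-1}(\psi_0)$, and the two preimages receive equal mass, which yields the claimed equality for every target $\psi_0$ at once.

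For (i) I would first collapse the execution-level posterior $\psi_1$ to a configuration-level quantity. By the indistinguishability analysis of \citeN{FC07}, the executions indistinguishable from $E$ are paired, across all configurations indistinguishable from $X^a(E)$, by a bijection under which paired executions carry equal conditional probability given their configuration; summing over each pair shows that $\psi_1(u,d,E)$ equals the conditional probability that $u$ visits $d$ given the indistinguishability class of $X^a(E)$. It then remains to observe that two I/O-automata configurations are indistinguishable exactly when they have the same observable data, and that the black-box relation $\approx$ is the identical condition on $(C_I, C_O, C_D)$, so that in each model the posterior is the ratio
\begin{equation*}
\psi(u,d) = \frac{\sum_{C' \approx C,\, C'_D(u) = d} \mu(C')}{\sum_{C' \approx C} \mu(C')},
\end{equation*}
where $\mu$ is the corresponding configuration measure. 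Because indistinguishability is the same condition in both models, these two ratios are identical functions of the observable data as soon as the measures $\mu$ induce the same joint law on $(C_I, C_O, C_D)$, which is the content of (ii). The one point requiring care is that in the I/O-automata model the coordinates $(C_I, C_O)$ are not primitive but are determined by the circuit choice, so matching the laws means summing the circuit measure over all circuits realizing a given compromise pattern and checking that the result equals the black-box product weight.

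For (ii) I would verify that $\pi$ matches the joint law of $(C_I, C_O, C_D)$. The destination coordinates agree by hypothesis, since in both models each $C_D(u)$ is drawn from $p^u$ independently of everything else. For the observation coordinates it suffices to show that, in the I/O-automata model, the pair $(\mathbf{1}[r^u_1 \in A], \mathbf{1}[r^u_l \in A])$ takes its four values with probabilities $(1-b)^2$, $b(1-b)$, $(1-b)b$, and $b^2$, independently across users; since circuits are chosen independently of destinations and of one another and each endpoint router is uniform over $R$, this reduces to the independence of first- and last-router compromise within a single circuit, each event having probability $b = |A|/|R|$. Granting this, $\pi_* \mu_{IO}$ and the black-box measure coincide on $(C_I, C_O, C_D)$, and combining with (i) finishes the argument.

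The main obstacle is the endpoint-independence step inside (ii): that within one circuit the events $r^u_1 \in A$ and $r^u_l \in A$ are independent, each of probability $b$. This is exactly the property that fails for DHT-based circuit selection and is what legitimizes the product form $(1-b)^2,\, b(1-b),\, (1-b)b,\, b^2$; I would discharge it by invoking the idealized selection measure of \citeN{FC07}, under which the first and last routers are chosen independently and uniformly over $R$, making the two compromise events independent Bernoulli$(b)$ variables. A secondary point, needed for the term-by-term matching of the two ratios in (i), is that the circuit measure factorizes across users, so that summing it over all circuits realizing a fixed compromise pattern returns precisely the product of per-user type weights with the denominators normalizing consistently; this factorization follows from users choosing their circuits independently.
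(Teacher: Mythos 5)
Your architecture (project executions to black-box data via $\pi$, show the posterior is a function of that data, and match the pushforward measures) is essentially the paper's, and your step (ii) — destinations drawn independently from $p^u$ and first/last-router compromise as independent Bernoulli($b$) events, which is exactly what legitimizes the functionality's weights $(1-b)^2$, $b(1-b)$, $(1-b)b$, $b^2$ — matches the paper's closing argument. The gap is in step (i), at the sentence claiming that two I/O-automata configurations are indistinguishable exactly when they have the same observable data $(C_I,C_O,C_D)$. This fails in both directions. In one direction, the adversary controls the compromised routers and therefore sees \emph{which} compromised router serves as $u$'s first or last hop, not merely that some compromised router does; two configurations realizing the same compromise pattern with different observed routers project to the same black-box data but are distinguishable. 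In the other direction, indistinguishable I/O-automata configurations may differ in the destinations of users with unobserved outputs and in the assignment of users to circuits with unobserved inputs, hence in $(C_I,C_O,C_D)$. Consequently the I/O indistinguishability class is neither the fiber of $\pi$ nor the preimage of a black-box $\approx$-class, and the identity between the two posterior ratios does not follow from the pushforward equality alone.

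The missing bridge, which is the actual crux of the paper's proof, is an invariance-and-counting argument: because routers are chosen independently and uniformly, the measure is invariant under changing the unobserved routers of a configuration; the number of configurations reached by such changes is the same for every destination/user-permutation ``skeleton'' in the class; and the probability of the fixed observed routers is a common factor across the whole class. Together these show that the posterior on the I/O class, pushed through the projection, is proportional to the black-box prior on the corresponding black-box class, whence $\psi_1(u,d,e)=\psi_2(u,d,\phi(C^a_1))$. Your ``secondary point'' about the circuit measure factorizing is the right ingredient, but you deploy it only to match the pushforward law in (ii) (summing over all circuits realizing a given compromise pattern), not to equate the conditional ratio over the strictly finer indistinguishability class with the ratio over the black-box class; as written, (i) is asserted rather than proved.
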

\begin{proof}
Let $\phi$ be the map from I/O-automata configurations to black-box configurations such that
\begin{enumerate}
\item $\phi(C)_D(u) = d^u$
\item $\phi(C)_I(u) = \left\{ \begin{array}{ll} 1 & \textrm{ if } r_1\in A\\ 0 & \textrm{ otherwise } \end{array} \right.$
\item $\phi(C)_O(u) = \left\{ \begin{array}{ll} 1 & \textrm{ if } r_l\in A\\ 0 & \textrm{ otherwise } \end{array} \right.$.
\end{enumerate}
$\phi$ essentially ``quotients out''  the specific router choices of each user, retaining the compromised status of the first and last routers as well as the destination.   It allows us to relate the posterior $\psi_1$ in the I/O-automata model to the $\psi_2$ in the black-box model.

Let $C_1^a$ be any I/O-automata configuration.  Given any execution $e$ of $C_1^a$, the adversary's posterior probability on configurations is
\begin{equation*}
\frac{Pr[X^a = C_2^a]}{\sum_{C^a \approx C_1^a} Pr[X^a = C^a]}
\end{equation*}
if $C_2^a \approx C_1^a$ and $0$ otherwise, because we set equal the probability of two executions that are paired with each other in the bijection on executions constructed in \citeN{FC07}.  Because the configurations determine which destination each user visits, the distribution $\psi_1(u,d,e)$ can be determined from the posterior distribution on configurations.  Notice that this distribution only puts positive probability on the set $\mathcal{C}^a$ of configurations that are indistinguishable from $C_1^a$.

The posterior distribution on I/O-automata configurations induces a posterior distribution on black-box configurations via $\phi$.  $\phi$ preserves the destination of each user, and so the distribution $\psi_1(u,d,e)$ can be determined from this distribution on black-box configurations.  Notice that this distribution only puts positive probability on the set of black-box configurations $\phi(\mathcal{C}^a)$ that are mapped to by I/O-automata configurations in $\mathcal{C}^a$.

To understand the set $\phi(\mathcal{C}^a)$ and its posterior distribution given $e$, consider the equivalence class $\mathcal{C}^b$ of the configuration $\phi(C_1^a)$.  Let $S$ be those configurations in $\mathcal{C}^a$ that differ from $C_1^a$ only in the destinations and the permutation of users.   From Theorems 1 and 2 in \citeN{FC07}, it follows that $\phi$ is a bijection between $S$ and $\mathcal{C}^b$.  The posterior probability of each $C_2^a\in S$ is proportional to $Pr[X^b = \phi(C_2^a)]$ because the prior probability of $C_2^a$ is $Pr[X^b = \phi(C_2^a)]$ multiplied by the probability selecting its given routers (which are the same for all $s\in S$) given that $\phi(X^a)=\phi(C_2^a)$.  Moreover, all of the other configurations in $C^a$ are reached by changing the unobserved routers of one of the configurations in $S$.  $\phi$ is invariant under such a change.  Also, the posterior probability is invariant under such a change because the routers are chosen independently and uniformly at random.  Furthermore, the number of I/O-automata configurations that are reached by such a change from some $s\in S$ is the same for all $s$.  Therefore, the posterior probability $Pr[\phi(X^a)=C^b | e]$ is proportional to $Pr[X^b=C^b]$ for $C^b\in \mathcal{C}^b$, and is zero otherwise.  Therefore, $\psi_1(u,d,e) = \psi_2(u,d,\phi(C_1^a))$.

By this equality, the probability that a random execution $E$ results in a given posterior $\psi_0(u,d)$ is equal to the probability that the I/O-automata configuration $X^a$ maps under $\phi$ to a black-box configuration $\phi(X^a) = C^b$ such that $\psi_2(u,d,C^b)=\psi_0(u,d)$.  The probability $Pr[\phi(X^a) = C^b]$ is equal to $Pr[X^b = C^b]$ because the probability of first-router compromise and the probability of an input being observed are both $b$, last-router compromise and an output being observed are both independent events with probability $b$, and user destinations are chosen independently in both models and follow the same distributions.  Therefore,
\begin{equation*}
Pr[\forall_{u\in U, d\in \Delta} \psi_1(u,d,E) = \psi_0(u,d)] = Pr[\forall_{u\in U, d\in \Delta} \psi_2(u,d,X) = \psi_0(u,d)].
\end{equation*}
\end{proof}

{\bf UC-emulation}
Expressing our black-box model within the UC framework allows it to be compared to protocols expressed within the same framework. In particular, if a protocol can be shown to UC-emulate $\mathcal{F}_{OR}$, then, making only common cryptographic assumptions, the adversary can make only negligibly better guesses about users' communication when interacting with that protocol than he can with the functionality. The results of \citeN{camlys05} suggest that such emulation is indeed possible. An onion routing protocol similar to their protocol combined with a message transmission functionality that hides messages not to corrupt parties (cf. \citeN{cryptoeprint:2000:067}), should indeed hide the routers that are not corrupt or next to corrupt routers on a circuit. Then $\mathcal{F}_{OR}$ provides the adversary with all the information about user inputs that a simulator needs in order to simulate the rest of the protocol and confuse the adversary.

\section{Expected Anonymity} \label{sec:expanon}
Let the set $\mathcal{C}$ of all configurations be the sample space and $X$ be a random configuration. Let $\Psi$ be the posterior probability of the event that $u$ chooses $d$ as a destination, that is, $\Psi(C) = Pr[X_D(u)=d | X\approx C]$.  $\Psi$ is our metric for the relationship anonymity of $u$ and $d$.

Let $\mathbb{N}^{\Delta}$ represent the set of multisets over $\Delta$.  Let $\rho(\Delta^0)$ be the maximum number of orderings of $\Delta^0 \in \mathbb{N}^{\Delta}$ such that the same destination is in any given location in every ordering:
\begin{equation*}
\rho(\Delta^0) = \prod_{\delta \in \Delta} |\{\delta \in \Delta^0\}|!
\end{equation*}

Let $\Pi(A,B)$ be the set of all injective maps $A\rightarrow B$.  The following theorem gives an exact expression for the conditional expectation of $\Psi$ in terms of the underlying parameters $U$, $\Delta$, $p$, and $b$:

\begin{theorem} \label{yexp}
\begin{multline} \label{ey3}
    E[\Psi | X_D(u)=d] = b(1-b)p^u_d + b^2 +\\
     \sum_{S\subseteq U : u\in S} \quad \sum_{\Delta^0 \in \mathbb{N}^{\Delta}: |\Delta^0|\le S} b^{n-|S|+|\Delta^0|}(1-b)^{2|S|-|\Delta^0|} \cdot \\
     \left( \sum_{T\subseteq S-u: |T|=|\Delta^0|-1} \quad \sum_{\pi \in \Pi(T+u,\Delta^0) : \pi(u)=d} p^u_d \prod_{v\in T} p^v_{\pi(v)} \right. \\
     \left. + \sum_{T\subseteq S-u: |T|=|\Delta^0|} \quad \sum_{\pi \in \Pi(T,\Delta^0)} p^u_d \prod_{v\in T} p^v_{\pi(v)} \right)^2 \cdot \\
     [\rho(\Delta^0)]^{-1}(p^u_d)^{-1} \left(\sum_{T\subseteq S:|T|=|\Delta^0|} \quad \sum_{\pi \in \Pi(T,\Delta^0)} \quad \prod_{v\in T} p^v_{\pi(v)}\right)^{-1}
\end{multline}


\end{theorem}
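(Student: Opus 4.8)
The plan is to decompose the conditional expectation according to which of the four outputs $\mathcal{F}_{OR}$ produces for $u$'s own circuit. The cases $(u,d)$ and $(u,\bot)$ — where $u$'s input is observed — will give the two leading terms, while the cases $(\bot,d)$ and $(\bot,\bot)$ — where $u$'s input is hidden — produce the combinatorial sum. Writing $E[\Psi \mid X_D(u)=d] = (p^u_d)^{-1}\sum_{C:\,C_D(u)=d}\Psi(C)\,Pr[X=C]$ (using $Pr[X_D(u)=d]=p^u_d$) and recalling $\Psi(C)=Pr[X_D(u)=d\mid X\approx C]$, I would first dispose of the two easy cases. When $u$ is linked, the adversary sees $(u,d)$, so $\Psi\equiv 1$ on that event, contributing $b^2$. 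When only $u$'s input is observed, $u$'s destination is completely unconstrained by the observations and is chosen independently of everything else, so $\Psi\equiv p^u_d$ there; weighting by its probability $b(1-b)$ gives the term $b(1-b)p^u_d$.

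The core of the argument is the remaining event, where $u$'s input is unobserved, so $u$ lies in the set $W$ of users that $\mathcal{F}_{OR}$ reports as output-only or as $(\bot,\bot)$. First I would characterize the indistinguishability class of such a configuration $C$: the linked pairs, the set of input-only users, and the multiset $\Delta^0$ of output-observed destinations are all fixed, while the choice of which size-$|\Delta^0|$ subset of $W$ is output-only, the assignment of the destinations in $\Delta^0$ to that subset, and the (unobserved) destinations of all other users are free. Summing $Pr[X=C']$ over the class, the free destinations of input-only and $(\bot,\bot)$ users each sum to $1$, leaving $Pr[X\approx C]$ and $Pr[X_D(u)=d,\,X\approx C]$ as explicit sums over subsets $T\subseteq W$ and injective assignments into $\Delta^0$; these are exactly the quantities $D$ and $B$ appearing (up to the factor $\rho(\Delta^0)$) inside the theorem.

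The identification $S=W$ and the $b,(1-b)$ exponents would then follow from two aggregation steps. Since $\Psi$ is constant on each indistinguishability class, grouping $\sum_C\Psi(C)Pr[X=C]$ by class replaces the summand by $(\mathrm{Num}/\mathrm{Denom})\cdot\mathrm{Num}$, which is where the \emph{square} on the inner sum originates. Fixing $S=W$ and summing the common prefactor over all ways to split $U\setminus S$ into linked and input-only users and over all destinations of the linked users collapses the product $\prod_{v\in U\setminus S}(b^2+b(1-b))$ to $b^{n-|S|}$; combined with the $(b(1-b))^{|\Delta^0|}(1-b)^{2(|S|-|\Delta^0|)}$ weight of the output-only versus $(\bot,\bot)$ split, this yields precisely $b^{\,n-|S|+|\Delta^0|}(1-b)^{\,2|S|-|\Delta^0|}$.

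The last bookkeeping step, and the main obstacle, is translating between destination assignments with prescribed image multiset $\Delta^0$ and the injective maps $\pi\in\Pi(T,\Delta^0)$ used in the statement: each value-assignment corresponds to exactly $\rho(\Delta^0)=\prod_{\delta}|\{\delta\in\Delta^0\}|!$ such injections, and one must check that this count is unchanged by the constraint $\pi(u)=d$, so that the single factor $\rho(\Delta^0)^{-1}$ can be pulled out of numerator and denominator alike. Carrying the $(p^u_d)^{-1}$ from the conditioning through the split of $\mathrm{Num}$ into its $u$-output-only part (those $\pi$ with $\pi(u)=d$) and its $u$-hidden part (carrying weight $p^u_d$) then recovers the bracketed expression $B$, and assembling the four contributions gives the claimed identity. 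I expect the delicate point to be keeping the multiplicities in $\Delta^0$ and the two sub-cases of $u$ mutually consistent, rather than any single hard estimate.
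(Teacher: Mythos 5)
Your proposal is correct and follows essentially the same route as the paper: split off the two cases where $u$'s input is observed (giving $b^2$ and $b(1-b)p^u_d$), then group the remaining configurations into indistinguishability classes parameterized by the unobserved-input set $S$ and the observed-destination multiset $\Delta^0$, so that $\Psi$ equals a ratio $\mathrm{Num}/\mathrm{Denom}$ of class weights and the expectation contributes $\mathrm{Num}^2/(\mathrm{Denom}\cdot p^u_d)$, with $\rho(\Delta^0)^{-1}$ correcting the injection-versus-assignment overcount. The paper packages the class weights as functions $f_0,f_1,f_2$ rather than deriving the prefactor collapse explicitly, but the decomposition, the origin of the square, and the handling of $\rho$ and $(p^u_d)^{-1}$ are all the same.
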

\begin{proof}
At a high level, the conditional expectation of $\Psi$ can be expressed as:
\begin{equation*}
  \textnormal{E}[\Psi | X_D(u)=d] = \sum_{C\in \mathcal{C}} \textnormal{Pr}[X=C | X_D(u)=d] \Psi(C).
\end{equation*}

We calculate $\Psi$ for a configuration $C$ by finding the relative weight of indistinguishable configurations in which $u$ selects $d$.  The adversary observes some subset of the circuits.  If we match the users to circuits in some way that sends users with observed inputs to their own circuits, the result is an indistinguishable configuration.  Similarly, we can match circuits to destinations in any way that sends circuits on which the output has been observed to their actual destination in $C$.

The value of $\Psi(C)$ is especially simple if $u$'s input has been observed.  If the output has not also been observed, then $\Psi(C) = p^u_d$.  If the output has also been observed, then $\Psi(C) = 1$.

For the case in which $u$'s input has not been observed, we have to take into account the destinations of and observations on the other users.  Let $S\subseteq U$ be the set of users $s$ such that $C_I(s)=0$.  Note that $u\in S$.  Let $\Delta^0$ be the multiset of the destinations of circuits in $C$ on which the input has not been observed, but the output has.

Let $f_0(S,\Delta^0)$ be the probability that in a random configuration the set of unobserved inputs is $S$ and the set of observed destinations with no corresponding observed input is $\Delta^0$:
\begin{equation*}
f_0(S,\Delta^0) = b^{n-|S|+|\Delta^0|} (1-b)^{2|S|-|\Delta^0|} [\rho(\Delta^0)]^{-1} \sum_{T\subseteq S:|T|=|\Delta^0|} \ \sum_{\pi \in \Pi(T,\Delta^0)} \ \prod_{v\in T} p^v_{\pi(v)}.
\end{equation*}

Let $f_1(S,\Delta^0)$ be the probability that in a random configuration the set of unobserved inputs is $S$, the set of observed destinations with no corresponding observed input is $\Delta^0$, the output of $u$ is observed, and the destination of $u$ is $d$:
\begin{multline*}
f_1(S,\Delta^0) = b^{n-|S|+|\Delta^0|} (1-b)^{2|S|-|\Delta^0|} [\rho(\Delta^0)]^{-1} p^u_d \cdot \\
\sum_{T\subseteq S-u:|T|=|\Delta^0|-1} \ \sum_{\pi \in \Pi(T+u,\Delta^0) : \pi(u)=d} \ \prod_{v\in T} p^v_{\pi(v)}.
\end{multline*}

Let $f_2(S,\Delta^0)$ be the probability that in a random configuration the set of unobserved inputs is $S$, the set of observed destinations with no corresponding observed input is $\Delta^0$, the output of $u$ is unobserved, and the destination of $u$ is $d$:
\begin{multline*}
f_2(S,\Delta^0) = b^{n-|S|+|\Delta^0|} (1-b)^{2|S|-|\Delta^0|}  [\rho(\Delta^0)]^{-1} p^u_d \cdot \\
\sum_{T\subseteq S-u:|T|=|\Delta^0|} \ \sum_{\pi \in \Pi(T,\Delta^0)} \ \prod_{v\in T} p^v_{\pi(v)}.
\end{multline*}

Now we can express the posterior probability $\Psi(C)$ as:
\begin{equation} \label{yform}
  \Psi(C) = \frac{f_1(S,\Delta^0) + f_2(S,\Delta^0)}{f_0(S,\Delta^0)}.
\end{equation}

The expectation of $\Psi$ is a sum of the above posterior probabilities weighted by their probability.  The probability that the input of $u$ has been observed but the output hasn't is $b(1-b)$.  The probability that both the input and output of $u$ have been observed is $b^2$.  These cases are represented by the first two terms in Equation~\ref{ey3}.

When the input of $u$ has not been observed, we have an expression of the posterior in terms of sets $S$ and $\Delta^0$.  The numerator ($f_1+f_2$) of Equation~\ref{yform} itself actually sums the weight of every configuration that is consistent with $S$, $\Delta^0$, and the fact that the destination of $u$ is $d$.  However, we must divide by $p^u_d$, because we condition on the event $\{X_D(u)=d\}$.

These observations give us the final summation in Equation~\ref{ey3}.
\end{proof}

\subsection{ Simple approximation of conditional expectation}
The expression for the conditional expectation of $\Psi$ in Equation~\ref{ey3} is difficult to interpret.  It would be nice if we could find a simple approximation.  The probabilistic analysis in \citeN{onion-routing:pet2000} proposes just such a simplification by reducing it to only two cases: $i)$ the adversary observes the user's input and output and therefore identifies his destination, and $ii)$ the adversary doesn't observe these and cannot improve his \emph{a priori} knowledge.  The corresponding simplified expression for the expection is:
\begin{equation} \label{lower}
E[\Psi | X_D(u) = d] \approx b^2 + (1-b^2)p^u_d.
\end{equation}
This is a reasonable approximation if the final summation in Equation~\ref{ey3} is about $(1-b)p^u_d$.  This summation counts the case in which $u$'s input is not observed, and to achieve a good approximation the adversary must experience no significant advantage or disadvantage from comparing the users with unobserved inputs ($S$) with the discovered destinations ($\Delta^0$).

The quantity $(1-b)p^u_d$ does provide a lower bound on the final summation.  It may seem obvious that considering the destinations in $\Delta^0$ can only improve the accuracy of adversary's prior guess about $u$'s destination.  However, in some situations the posterior probability for the correct destination may actually be smaller than the prior probability.  This may happen, for example, when some user $v$, $v\neq u$, communicates with a destination $e$, $e\neq d$, and only $u$ is \emph{a priori} likely to communicate with $e$.  If the adversary observes the communication to $e$, it may infer that it is likely that $u$ was responsible and therefore didn't choose $d$.

It is true, however, that in expectation this probability can only increase.  Therefore Equation~\ref{lower} provides a lower bound on the anonymity metric.

The proof of this fact relies on the following lemma.  Let $\mathcal{E}$ be an event in some finite sample space $\Omega$.  Let $\mathcal{A}_1,\ldots ,\mathcal{A}_n$ be a set of disjoint events such that $\mathcal{E} \subseteq \bigcup_i \mathcal{A}_i$, and let $\mathcal{A}^j = \bigcup_{i=1}^j \mathcal{A}_i$.  Let $\mathcal{E}_i = \mathcal{E}\cap \mathcal{A}_i$.  Finally, let $Y(\omega) = \sum_i 1_{\mathcal{E}_i}(\omega) Pr[\mathcal{E}_i]/Pr[\mathcal{A}_i]$ (where $1_{\mathcal{E}_i}$ is the indicator function for $\mathcal{E}_i$). $Y(\omega)$ is thus the conditional probability $Pr[\mathcal{E} | \mathcal{A}_i]$, where $\omega \in \mathcal{E}_i$.

\begin{lemma} \label{minlem}
$Pr[\mathcal{E} | \mathcal{A}^n] \le E[Y | \mathcal{E}]$
\end{lemma}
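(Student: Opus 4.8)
The plan is to reduce both sides of the inequality to explicit expressions in the quantities $e_i := Pr[\mathcal{E}_i]$ and $a_i := Pr[\mathcal{A}_i]$, and then to recognize that the resulting inequality is exactly the Cauchy--Schwarz inequality in its Engel (``sum of squares over sum'') form. The structural facts I would set up at the outset are that, because the $\mathcal{A}_i$ are disjoint and $\mathcal{E} \subseteq \bigcup_i \mathcal{A}_i = \mathcal{A}^n$, the events $\mathcal{E}_i = \mathcal{E}\cap\mathcal{A}_i$ form a partition of $\mathcal{E}$; hence $\sum_i e_i = Pr[\mathcal{E}]$ and $\sum_i a_i = Pr[\mathcal{A}^n]$.

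First I would evaluate the left-hand side. Since $\mathcal{E} \subseteq \mathcal{A}^n$, we have $\mathcal{E}\cap\mathcal{A}^n = \mathcal{E}$, so $Pr[\mathcal{E}\mid\mathcal{A}^n] = Pr[\mathcal{E}]/Pr[\mathcal{A}^n] = (\sum_i e_i)/(\sum_i a_i)$. Next I would evaluate the right-hand side by unwinding the definition of $Y$: on any atom $\omega \in \mathcal{E}_i$ the variable $Y$ takes the value $e_i/a_i$, and it vanishes off $\mathcal{E}$, so conditioning on $\mathcal{E}$ and summing atom-by-atom over each block $\mathcal{E}_i$ gives $E[Y\mid\mathcal{E}] = \frac{1}{Pr[\mathcal{E}]}\sum_i (e_i/a_i)\,Pr[\mathcal{E}_i] = \frac{1}{\sum_i e_i}\sum_i e_i^2/a_i$.

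With both sides in hand, the claim $Pr[\mathcal{E}\mid\mathcal{A}^n]\le E[Y\mid\mathcal{E}]$ becomes, after clearing the positive factor $\sum_i e_i$, the inequality $\big(\sum_i e_i\big)^2 \le \big(\sum_i a_i\big)\big(\sum_i e_i^2/a_i\big)$. This is precisely Cauchy--Schwarz applied to the vectors with components $\sqrt{a_i}$ and $e_i/\sqrt{a_i}$, and I would invoke it directly, implicitly restricting the sums to indices with $a_i>0$ (harmless, since $\mathcal{E}_i\subseteq\mathcal{A}_i$ forces $e_i=0$ whenever $a_i=0$).

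I do not anticipate a genuine obstacle: the only real content is the algebraic recognition that the Engel form of Cauchy--Schwarz is the right tool, together with the bookkeeping needed to rewrite $E[Y\mid\mathcal{E}]$ as $\frac{1}{\sum_i e_i}\sum_i e_i^2/a_i$. It is worth noting that equality holds exactly when $e_i/a_i = Pr[\mathcal{E}\mid\mathcal{A}_i]$ is constant across $i$, which is the formal counterpart of the paper's earlier remark that the simple approximation in Equation~\ref{lower} is tight precisely when comparing the unobserved inputs against the discovered destinations confers no net advantage or disadvantage.
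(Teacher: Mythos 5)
Your proposal is correct and follows essentially the same route as the paper: both reduce the claim to the inequality $\bigl(\sum_i Pr[\mathcal{E}_i]\bigr)^2 \le \bigl(\sum_i Pr[\mathcal{A}_i]\bigr)\bigl(\sum_i Pr[\mathcal{E}_i]^2/Pr[\mathcal{A}_i]\bigr)$ and dispatch it with Cauchy--Schwarz applied to the vectors with components $\sqrt{Pr[\mathcal{A}_i]}$ and $Pr[\mathcal{E}_i]/\sqrt{Pr[\mathcal{A}_i]}$. The only differences are presentational (you clear denominators first, the paper chains the manipulations in one display), plus your added remarks on the $a_i=0$ edge case and the equality condition.
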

\begin{proof}
\begin{displaymath}
\begin{array}{lll}
Pr[\mathcal{E} | \mathcal{A}^n] &=  \frac{Pr[\mathcal{E}]}{Pr[\mathcal{A}^n]}&\\
&= \frac{\left( \sum_i \frac{Pr[\mathcal{E}_i] \sqrt{Pr[\mathcal{A}_i]}}{\sqrt{Pr[\mathcal{A}_i]}} \right)^2}{Pr[\mathcal{A}^n] Pr[\mathcal{E}]}& \textrm{by a simple rewriting}\\
&\le \frac{\left( \sqrt{\sum_i \frac{(Pr[\mathcal{E}_i])^2}{Pr[\mathcal{A}_i]}}\sqrt{\sum_i Pr[\mathcal{A}_i]} \right)^2}{Pr[\mathcal{A}^n] Pr[\mathcal{E}]} & \textrm{by the Cauchy-Schwartz inequality}\\
&= \sum_i \frac{(Pr[\mathcal{E}_i])^2}{Pr[\mathcal{A}_i] Pr[\mathcal{E}]}\\
&= E[Y | \mathcal{E}]
\end{array}
\end{displaymath}
\end{proof}

\begin{theorem} \label{thm:lwrbnd}
$E[\Psi | X_D(u)=d] \ge  b^2 + (1-b^2)p^u_d$
\end{theorem}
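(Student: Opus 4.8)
The plan is to begin from the exact formula for $E[\Psi \mid X_D(u)=d]$ supplied by Theorem~\ref{yexp} and reduce the claimed inequality to a single statement about the final summation in Equation~\ref{ey3}. Since the first two terms of that equation contribute exactly $b(1-b)p^u_d + b^2$, the desired bound $b^2 + (1-b^2)p^u_d$ is equivalent to the assertion that the final summation is at least $(1-b)p^u_d$; indeed $b(1-b)p^u_d + (1-b)p^u_d = (1-b^2)p^u_d$, so adjoining the $b^2$ term recovers the target exactly. Thus the entire content of the theorem lies in lower-bounding that final summation, which is precisely the claim flagged in the preceding discussion.

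To obtain that lower bound I would invoke Lemma~\ref{minlem}. The application is to let $\mathcal{A}_{S,\Delta^0}$, ranging over pairs with $u\in S$, be the event that the set of users with unobserved input is exactly $S$ and the multiset of destinations observed with no corresponding observed input is exactly $\Delta^0$. These events are disjoint and their union $\mathcal{A}^n$ is the event that $u$'s input is unobserved. I would take $\mathcal{E} = \{X_D(u)=d\}\cap\mathcal{A}^n$, so that $\mathcal{E}\subseteq\mathcal{A}^n$ as required. With these choices the lemma's quantity $Y$ coincides with the posterior $\Psi$: one has $\Pr[\mathcal{E}\cap\mathcal{A}_{S,\Delta^0}] = f_1(S,\Delta^0)+f_2(S,\Delta^0)$ and $\Pr[\mathcal{A}_{S,\Delta^0}] = f_0(S,\Delta^0)$, whence $\Pr[\mathcal{E}\mid\mathcal{A}_{S,\Delta^0}] = (f_1+f_2)/f_0 = \Psi$ by Equation~\ref{yform}.

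It then remains to evaluate both sides of the lemma's inequality $\Pr[\mathcal{E}\mid\mathcal{A}^n]\le E[Y\mid\mathcal{E}]$. For the left side, the independence of the destination choice from the observation of $u$'s input gives $\Pr[\mathcal{A}^n]=1-b$ and $\Pr[\mathcal{E}]=(1-b)p^u_d$, so that $\Pr[\mathcal{E}\mid\mathcal{A}^n]=p^u_d$. For the right side, the proof of Lemma~\ref{minlem} evaluates $E[Y\mid\mathcal{E}] = \sum_{S,\Delta^0}(f_1+f_2)^2/(f_0\,\Pr[\mathcal{E}])$; substituting $\Pr[\mathcal{E}]=(1-b)p^u_d$ expresses this as $(1-b)^{-1}$ times the final summation of Equation~\ref{ey3}. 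The lemma then reads $p^u_d \le (1-b)^{-1}\cdot(\text{final summation})$, which rearranges to "final summation $\ge (1-b)p^u_d$," completing the reduction from the first paragraph.

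The step I expect to be the main obstacle is the bookkeeping needed to confirm that the abstract output of Lemma~\ref{minlem} matches the concrete formula of Equation~\ref{ey3} term by term. Writing $K(S,\Delta^0)=b^{n-|S|+|\Delta^0|}(1-b)^{2|S|-|\Delta^0|}[\rho(\Delta^0)]^{-1}$ for the factor common to $f_0$, $f_1$, and $f_2$, one has $f_1+f_2 = K\,p^u_d\,M$ and $f_0 = K\,D$, where $M$ is the bracketed sum of products over injections $\pi$ appearing (squared, with an extra $p^u_d$) in Equation~\ref{ey3} and $D$ is the analogous unsquared denominator sum. Substituting these into $\sum(f_1+f_2)^2/(f_0\,p^u_d)$ yields $\sum K\,p^u_d\,M^2/D$, which must be checked to agree exactly with the final summand $b^{n-|S|+|\Delta^0|}(1-b)^{2|S|-|\Delta^0|}(p^u_dM)^2[\rho(\Delta^0)]^{-1}(p^u_d)^{-1}D^{-1}$ of Equation~\ref{ey3}. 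This is routine algebra, but it requires care to verify that the $p^u_d$ factors and the $\rho$ and $K$ factors cancel in precisely the right way.
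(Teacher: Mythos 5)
Your proposal is correct and follows essentially the same route as the paper: both decompose $E[\Psi\mid X_D(u)=d]$ into the $b^2$ and $b(1-b)p^u_d$ terms plus the unobserved-input contribution, and both apply Lemma~\ref{minlem} with $\mathcal{E}$ the event that $u$ chooses $d$ (restricted to unobserved input) to conclude that contribution is at least $(1-b)p^u_d$. The only cosmetic difference is that you partition by the pairs $(S,\Delta^0)$ rather than by the full indistinguishability classes, which is harmless since Equation~\ref{yform} shows $\Psi$ is constant on each such pair, and your explicit matching of the lemma's output against the summand of Equation~\ref{ey3} checks out.
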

\begin{proof}
As described in the proof of Theorem~\ref{yexp}:
\begin{equation*}
E[\Psi | X_D(u)=d] = b^2 + b(1-b)p^u_d + (1-b)E[\Psi | X_D(u)=d \land X_I(u)=0].
\end{equation*}

To apply Lemma~\ref{minlem}, take the set of configurations $\mathcal{C}$ to be the sample space $\Omega$.  Take $\{X_D(u)=d\}$ to be the event $\mathcal{E}$.  Take the indistinguishability equivalence relation to be the sets $\mathcal{A}_i$.  Finally, take $\Psi$ to be $Y$.  Then the lemma shows that $E[\Psi | X_D(u)=d \land X_I(u)=0]\ge p^u_d$.
\end{proof}

\subsection{Worst-case Anonymity}
To examine the accuracy of our approximation, we look at how large the final summation in Equation~\ref{ey3} can get as the users' destination distributions vary.  Because this is the only term that varies with the other user distributions, this will also provide a worst-case guarantee on expected anonymity metric.  Our results will show that, in the limit as the number of users grows, the worst case can occur when the users other than $u$ act as differently from $u$ as possible by always visiting the destination $u$ is otherwise least likely to visit.  Less obviously, we show that the limiting maximum can also occur when the users other than $u$ always visit $d$.  This happens because it makes the adversary observe destination $d$ often, causing him to suspect that $u$ chose $d$.  Our results also show that the worst-case expectation is about $b+(1-b)p^u_d$, which is significantly worse than the simple approximation above.

As the first step in finding the maximum of Equation~\ref{ey3} over $(p^v)_{v\neq u}$, we observe that it is obtained when every user $v\neq u$ chooses only one destination $d_v$, \emph{i.e.} $p^v_{d_v}=1$ for some $d_v\in \Delta$.

\begin{lemma} \label{vertex}
  A maximum of $E[\Psi | X_D(u)=d]$ over $(p^v)_{v\neq u}$ must occur when, for all $v\neq u$, there exists some $d_v\in \Delta$ such that $p^v_{d_v}=1$.
\end{lemma}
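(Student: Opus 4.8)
The plan is to regard $E[\Psi \mid X_D(u)=d]$ as a function on the product of simplices $\prod_{v\neq u}\mathcal{P}^v$, where $\mathcal{P}^v$ is the set of probability distributions $p^v$ over $\Delta$, and to prove that it is convex in each block $p^w$ separately (all other $p^v$ held fixed). Granting this, the lemma follows from the standard fact that a function convex in each of several blocks of variables attains its maximum over a product of polytopes at a vertex: starting from any maximizer, I replace $p^{w_1}$ by a vertex of $\mathcal{P}^{w_1}$ maximizing the convex restriction of the objective to $\mathcal{P}^{w_1}$, which cannot decrease the objective, then repeat for $p^{w_2}$, and so on through all $v\neq u$. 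Because the vertices of $\mathcal{P}^v$ are exactly the deterministic distributions, after one pass every $p^v$ satisfies $p^v_{d_v}=1$ for some $d_v\in\Delta$, while the value remains maximal; hence a maximum occurs at such a point.

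It remains to establish block convexity. Fix $w\neq u$ and all $p^v$ with $v\neq u,w$. Only the final summation of Equation~\ref{ey3} depends on $p^w$, so it suffices to show that each of its summands is convex in $p^w$. Write $N$ for the bracketed expression that is squared and $D=\sum_{T\subseteq S:|T|=|\Delta^0|}\sum_{\pi\in\Pi(T,\Delta^0)}\prod_{v\in T}p^v_{\pi(v)}$ for the final parenthesized denominator, so that each summand equals a positive constant $b^{n-|S|+|\Delta^0|}(1-b)^{2|S|-|\Delta^0|}[\rho(\Delta^0)]^{-1}(p^u_d)^{-1}$ (which does not involve $p^w$) times $N^2/D$. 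The key point is that both $N$ and $D$ are \emph{affine} in $p^w$: in each monomial $\prod_{v\in T}p^v_{\pi(v)}$ the variable $p^w$ occurs to the first power (as the factor $p^w_{\pi(w)}$) when $w\in T$ and not at all when $w\notin T$, since each user appears at most once in the product. The quadratic-over-linear map $(a,c)\mapsto a^2/c$ is convex on $\{c>0\}$, and its composition with the affine map $p^w\mapsto\bigl(N(p^w),D(p^w)\bigr)$ is again convex, so each summand is convex in $p^w$ wherever $D>0$. A sum of convex functions with nonnegative coefficients is convex, and the two constant leading terms $b(1-b)p^u_d+b^2$ do not affect this, so the objective is convex in $p^w$.

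The main obstacle is the behavior of the denominator $D$ on the boundary of the simplex, where it can vanish; there the displayed ratio is not literally defined, and convexity must be extended from the interior. This is resolved by observing that such boundary points carry no probability. Substituting the $f_i$ into Equation~\ref{yform} gives $\Psi(C)=N/D$, and since $\Psi(C)$ is a conditional probability we have $\Psi(C)\le1$, hence $N\le D$. Consequently each summand, being a fixed multiple at most $1$ of $N^2/(p^u_d\,D)\le N/p^u_d\le D/p^u_d$, tends to $0$ as $D\to0$. Thus every summand extends continuously by the value $0$ to the whole (closed) simplex, and a function that is convex on the relative interior and continuous on the closure is convex on the closure. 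This legitimizes applying the vertex argument of the first paragraph on the closed simplices and completes the proof.
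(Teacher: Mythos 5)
Your proposal is correct and follows essentially the same route as the paper: both observe that each summand of Equation~\ref{ey3} is a squared-affine-over-affine function of the block $p^v$, hence convex in that block, so a maximum is attained at a vertex of the simplex. The only differences are cosmetic --- the paper verifies convexity by differentiating twice along a two-coordinate line $p^v_f=\zeta-p^v_e$ rather than invoking the quadratic-over-linear composition rule, and it silently passes over the vanishing-denominator boundary cases that you handle explicitly.
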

\begin{proof}
Take some user $v\neq u$ and two destinations $e,f\in \Delta$.  Assign arbitrary probabilities in $p^v$ to all destinations except for $f$, and let $\zeta=1-\sum_{\delta \neq e,f}p^v_{\delta}$.  Then $p^v_f = \zeta -p^v_e$.  Consider $E[\Psi | X_D(u)=d]$ as a function of $p^v_e$.  The terms $t_i$ of Equation~\ref{ey3} that correspond to any fixed $S$ and $\Delta^0$ are of the following general form, where $c_1^i, c_2^i, c_3^i, c_4^i, c_5^i, c_6^i \ge 0$:
\begin{displaymath}
t_i = \frac{(c_1^i p^v_e + c_2^i({\zeta}-p^v_e)+c_3^i)^2}{c_4^i p^v_e + c_5^i (\zeta-p^v_e) + c_6^i}.
\end{displaymath}

This is a convex function of $p^v_e$:
\begin{displaymath}
D^2_{p_e^v} t_i = \frac{2(c_3^i(c_4^i-c_5^i)+c_2^i(c_4^i \zeta+c_6^i)-c_1^i(c_5^i \zeta+c_6^i))^2}{(c_5^i(\zeta-p^v_e)+c_4^i p^v_e+c_6^i)^3} \ge 0.
\end{displaymath}

The leading two terms of $E[\Psi | X_D(u)=d]$ are constant in $p^v$, and the sum of convex functions is a convex function, so $E[\Psi | X_D(u)=d]$ is convex in $p^v_e$.  Therefore, a maximum of $E[\Psi | X_D(u)=d]$ must occur when $p^v_e\in \{0,1\}$.\hfill\end{proof}

Order the destinations $d=d_1,\ldots ,d_{|\Delta|}$ such that $p^u_{d_i}\ge p^u_{d_{i+1}}$ for $i>1$.  The following lemma shows that we can further restrict ourselves to distribution vectors in which, for every user except $u$, the user either always chooses $d$ or always chooses $d_{|\Delta|}$.

\begin{lemma} \label{ef}
A maximum of $E[\Psi | X_D(u)=d]$ must occur when, for all users $v$, either $p^v_{d_1}=1$ or $p^v_{d_{|\Delta|}}=1$.
\end{lemma}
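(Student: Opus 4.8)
The plan is to build on Lemma~\ref{vertex}: since a maximum of $E[\Psi \mid X_D(u)=d]$ over $(p^v)_{v\neq u}$ is attained at a vertex of each user's simplex, I may assume every $v\neq u$ is deterministic, with $p^v_{d_v}=1$ for some $d_v\in\Delta$. It then suffices to show that any such vertex assignment can be modified, without decreasing the objective, so that every $d_v$ lies in $\{d_1,d_{|\Delta|}\}$. I would do this by a one-step exchange argument: if a user $v$ is placed at an intermediate destination $e=d_j$ with $2\le j\le|\Delta|-1$, then reassigning $v$ directly to $d_{|\Delta|}$ does not decrease $E[\Psi\mid X_D(u)=d]$. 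Users placed at $d_1=d$ are left untouched, so iterating the claim over all users pushes every non-$d$ choice down the $p^u$-ordering to $d_{|\Delta|}$ and yields the lemma.

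To prove the exchange claim I reuse the one-variable convexity already extracted in the proof of Lemma~\ref{vertex}. Fix all users other than $v$, and let $v$ distribute its mass only between $e=d_j$ and $f=d_{|\Delta|}$, writing $p^v_e=\zeta-x$ and $p^v_f=x$ with the remaining coordinates of $p^v$ held at $0$ (so $\zeta=1$). Each term of Equation~\ref{ey3} indexed by a pair $(S,\Delta^0)$ has the form $(c_1^i p^v_e+c_2^i p^v_f+c_3^i)^2/(c_4^i p^v_e+c_5^i p^v_f+c_6^i)$ with nonnegative coefficients, hence is convex in $x$; so the restricted objective $h(x)$ is convex on $[0,\zeta]$ and attains its maximum at an endpoint. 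The endpoint $x=0$ is $v$ at $e$ and $x=\zeta$ is $v$ at $d_{|\Delta|}$, so it remains to compare the two endpoint values and show $h(\zeta)\ge h(0)$. A convenient sufficient condition, when it holds, is $h'(0)\ge 0$, since convexity then makes $h$ nondecreasing on all of $[0,\zeta]$; in general I would compare the endpoints directly.

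The real content, and the step I expect to be the main obstacle, is establishing $h(\zeta)\ge h(0)$ at the level of the combinatorial sums. Writing the bracketed numerator of Equation~\ref{ey3} as $N_1+N_2$ (the two injective-map sums that $f_1$ and $f_2$ contribute in the proof of Theorem~\ref{yexp}) and its varying denominator factor as $D$ (the $f_0$ sum), I would track how $N_1$, $N_2$, and $D$ change when $v$'s unit of mass is shifted from $e$ to $d_{|\Delta|}$. The whole expression is symmetric under relabeling any destinations other than $d$, so the only asymmetry between $e$ and $d_{|\Delta|}$ enters through $u$'s own fixed distribution, namely the terms in $D$ where $u\in T$ maps to $e$ or to $d_{|\Delta|}$; here the ordering gives $p^u_{d_j}\ge p^u_{d_{|\Delta|}}$. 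The aim is to show that this single inequality forces $(N_1+N_2)^2/D$ to be no smaller at $d_{|\Delta|}$ than at $e$, summed against the positive weights $b^{\,n-|S|+|\Delta^0|}(1-b)^{2|S|-|\Delta^0|}[\rho(\Delta^0)]^{-1}(p^u_d)^{-1}$. The delicate bookkeeping is that other deterministic users may already occupy $e$ or $d_{|\Delta|}$, altering the multiplicities recorded in $\Delta^0$ and hence the ranges of the sums over $T\subseteq S$ and over $\pi\in\Pi(T,\Delta^0)$; these multiplicities must be carried through $\rho(\Delta^0)$ and the injective-map counts so that terms on the two sides pair up correctly. Once this pairing is organized the comparison reduces to $p^u_{d_j}\ge p^u_{d_{|\Delta|}}$, completing the exchange and, by iteration, the lemma. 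The \emph{reinforcement} effect that distinguishes $d_1=d$ from the other destinations is exactly why $d$ is exempt from this monotonicity and is retained as the second allowed vertex rather than being absorbed into $d_{|\Delta|}$.
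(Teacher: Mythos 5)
Your reduction to vertex configurations via Lemma~\ref{vertex} and the overall plan---push every user at an intermediate destination down to $d_{|\Delta|}$ using the ordering $p^u_{d_j}\ge p^u_{d_{|\Delta|}}$---match the paper's strategy, but the proof has a genuine gap at exactly the step you flag as the main obstacle: the endpoint comparison $h(\zeta)\ge h(0)$. The convexity of each term $(c_1^i p^v_e+c_2^i p^v_f+c_3^i)^2/(c_4^i p^v_e+c_5^i p^v_f+c_6^i)$ in the mixing parameter only tells you that the restricted maximum sits at one of the two endpoints; it gives no information about which one, so the entire content of the exchange lies in the comparison you defer to ``delicate bookkeeping.'' That bookkeeping is precisely what the paper found intractable: after regrouping by $S$ and $T$, the relevant piece of the sum is a convolution $\sum_{t_{d_i}}\binom{s_{d_i}}{t_{d_i}}\binom{s_{d_j}}{m-t_{d_i}}c_{t_{d_i}}$ with a rational weight $c_{t_{d_i}}$, and there is no clean term-by-term pairing between the configuration with $v$ at $e$ and the one with $v$ at $d_{|\Delta|}$ that reduces to $p^u_{d_j}\ge p^u_{d_{|\Delta|}}$: the two destinations are also occupied asymmetrically by the other deterministic users, which changes the binomial multiplicities and not merely $u$'s prior.

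The paper avoids the one-step exchange entirely. It compares only the two extreme allocations $s_{d_i}=0$ and $s_{d_i}=\nu$ of the population split between $d_i$ and $d_j$: it bounds $f(S,T')$ above by $\tilde f(S,T')$ using a tangent line to the concave coefficient $c_{t_{d_i}}$ at $i_0=m\,s_{d_i}/(s_{d_i}+s_{d_j})$ (which collapses the convolution via Vandermonde), shows $\tilde f$ is convex in $s_{d_i}$ (Lemma~\ref{lem:d2f}), and exploits the fact that the bound is tight at $s_{d_i}=0$ to conclude $f(s_{d_i})\le\tilde f(s_{d_i})\le\tilde f(0)=f(0)$. Note that this chain is consistent with $f$ failing to be monotone in $s_{d_i}$, so the single-user exchange you propose to iterate may actually decrease the objective at intermediate allocations even though the global maximum is at $s_{d_i}=0$; your iteration would then stall. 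To repair the argument you would either have to prove the one-step inequality directly (which the paper's own difficulties suggest is at least as hard as what it does) or switch to the bulk move of all users from $d_i$ to $d_j$ at once, in which case you need the tangent-line and convexity machinery of Lemma~\ref{lem:d2f}.
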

\begin{proof}
Assume, following Lemma~\ref{vertex}, that $(p^v)_{v\neq u}$ is an extreme point of the set of possible distribution vectors.

Equation~\ref{ey3} groups configurations first by the set $S$ with unobserved inputs and second by the observed destinations $\Delta^0$.  Instead, group configurations first by $S$ and second by the set $T\subseteq S$ with observed outputs.  Because every user except $u$ chooses a destination deterministically, $\Psi$ only depends on the sets $S$ and $T$.  Let $\Psi_1(S,T)$ be this value.
\begin{equation} \label{ey4}
  \begin{array}{ll}
  E[\Psi | X_D(u)=d] = &b(1-b)p^u_d + b^2 +\\
         & \sum_{S : u\in S} \sum_{T : T\subseteq S} b^{n-|S|+|T|}(1-b)^{2|S|-|T|} \Psi_1(S,T).
  \end{array}
\end{equation}

Select two destinations $d_i,d_j, 1<i<j$.  We break up the sum in Equation~\ref{ey4} and show that, for every piece, the sum can only be increased by changing $(p^v)_v$ so that any user that always chooses $d_i$ always chooses $d_j$ instead.

Fix $S\subseteq U$ such that $u\in S$.  Let $S_i, S_j\subseteq S$ be such that $p^s_{d_i}=1$ if and only if $s\in S_i$, and $p^s_{d_j}=1$ if and only if $s\in S_j$.  Fix $T'\subseteq S\backslash S_i\backslash S_j$ and some $t\ge |T'|$.

Let $f(S,T')$ be the sum of terms in Equation~\ref{ey4} that are indexed by $S$ and some $T$ such that $|T|=t$ and $T\supseteq T'$.  To calculate $f(S,T')$, group its terms by the number $t_{d_i}$ of users $v$ in $T$ such that $X_D(v)=d_i$.  Let $t_e$ be the number for these terms of users $v$ in $T'$ such that $X_D(v)=e$, $e\in \Delta \backslash \{d_i,d_j\}$.  The number $t_{d_j}$ of users $v$ such that $X_D(v)=d_j$ for these terms is then $t-\sum_{e\in \Delta - d_j} t_e$.  Let $s_e$ be the number of users $v$ in $S-u$ such that $X_D(v)=e$. The number of terms in $f(S,T')$ with a given $t_{d_i}$ is then
\begin{equation*}
\binom{s_{d_i}}{t_{d_i}} \binom{s_{d_j}}{t_{d_j}}.
\end{equation*}
For each of these terms, $\Psi_1$ is the same.  To calculate it, let $f_{\delta}$ be the number of configurations that yield the given $S$ and $(t_e)_{e\in \Delta}$ and are such that $u$'s output is observed with destination $\delta$:
\begin{equation*}
f_{\delta}(t_{d_i}) = \binom{s_{\delta}}{t_{\delta}-1} \prod_{e\in \Delta - \delta} \binom{s_{e}}{t_{e}},
\end{equation*}
and let $f_0$ be the number of configurations that yield the same $S$ and $(t_e)_{e\in \Delta}$ and are such that $u$'s output is unobserved:
\begin{equation*}
f_0(t_{d_i}) =  \prod_{e\in \Delta} \binom{s_e}{t_e}.
\end{equation*}
Then the posterior probability given $S$ and $(t_e)_{e\in \Delta}$ is
\begin{equation*}
\frac{p^u_d \left(f_d(t_{d_i}) + f_0(t_{d_i}) \right)}{\sum_{\delta \in \Delta} p^u_{\delta} f_{\delta}(t_{d_i}) + f_0(t_{d_i})}.
\end{equation*}
Therefore, letting $m= t-\sum_{e \in \Delta\backslash \{d_i,d_j\}} t_e$,
\begin{align*}
  f(S,T') = &b^{n-|S|+t}(1-b)^{2|S|-t}\sum_{t_{d_i}=0}^m \binom{s_{d_i}}{t_{d_i}} \binom{s_{d_j}}{m-t_{d_i}} \frac{p^u_d \left(f_d(t_{d_i}) + f_0(t_{d_i}) \right)}{\sum_{\delta \in \Delta} p^u_{\delta} f_{\delta}(t_{d_i}) + f_0(t_{d_i})}.
\end{align*}

The binomial coefficients of $f_{\delta}$ and $f_0$ in the numerator and denominator largely cancel, and the whole expression can be simplified to
\begin{equation*}
  f(S,T') = \alpha \sum_{t_{d_i}=0}^m  \binom{s_{d_i}}{t_{d_i}} \binom{s_{d_j}}{m-t_{d_i}} \frac{(s_{d_i}+1-t_{d_i})(s_{d_j}+1-m+t_{d_i})}{\left( \begin{array}{l} p^u_{d_i}(s_{d_i}+1)(s_{d_j}+1-m+t_{d_i}) +\\ p^u_{d_j}(s_{d_j}+1)(s_{d_i}+1-t_{d_i}) +\\ (s_{d_i}+1-t_{d_i})(s_{d_j}+1-m+t_{d_i})\beta \end{array} \right)}
\end{equation*}
for some $\alpha,\beta \ge 0$.

This can be seen as the weighted convolution of binomial coefficients.  Unfortunately, there is no obvious way to simplify the expression any further to find the maximum as we trade off $s_{d_i}$ and $s_{d_j}$.  There is a closed-form sum if the coefficient of the binomial product is a fixed-degree polynomial, however.  Looking at the coefficient, we can see that it is concave.
\begin{displaymath}
  \begin{array}{lll}
    c_{t_{d_i}} &= &\frac{(s_{d_i}+1-t_{d_i})(s_{d_j}+1-m+t_{d_i})}{p^u_{d_i}(s_{d_i}+1)(s_{d_j}+1-m+t_{d_i}) + p^u_{d_j}(s_{d_j}+1)(s_{d_i}+1-t_{d_i}) + (s_{d_i}+1-t_{d_i})(s_{d_j}+1-m+t_{d_i})\beta}.\\
& & \\
    D^2_{t_{d_i}} c_{t_{d_i}} &= &-\frac{\left( \begin{array}{l} 2((s_{d_i}+1)(s_{d_j}+1)(2+s_{d_i}+s_{d_j}-m)^2 p^u_{d_i}p^u_{d_j}+ \\ b((s_{d_i}+1)(s_{d_j}+1+t_{d_i}-m)^3 p^u_{d_i} +  (s_{d_j}+1)(s_{d_i}+1-t_{d_i})^3 p^u_{d_j})) \end{array} \right) }
{((s_{d_j}+1+t_{d_i}-m)(b(s_{d_i}+1-t_{d_i})+p^u_{d_i}(s_{d_i}+1)) + (s_{d_j}+1)(s_{d_i}+1-t_{d_i}) p^u_{d_j})^3}\\
    &\le &0.
  \end{array}
\end{displaymath}

We can use this fact to bound the sum above by replacing $c_{t_{d_i}}$ with a line tangent at some point $i_0$.  Call this approximation $\tilde{f}$.  Holding $s_{d_i}+s_{d_j}$ constant, this approximation is in fact equal at $s_{d_i}=0$ because the sum has only one term.  Then, if $s_{d_i}=0$ still maximizes the sum, the theorem is proved.  Let $c'_{i_0} = D_{t_{d_i}} c_{t_{d_i}} \big|_{t_{d_i}=i_0}$.

\begin{eqnarray*}
  f(S,T') &\le& \sum_{t_{d_i}=0}^m \binom{s_{d_i}}{t_{d_i}} \binom{s_{d_j}}{m-t_{d_i}}(c'_{i_0}(t_{d_i}-i_0)+c_{i_0})\\
          &=& \binom{s_{d_i}+s_{d_j}}{m} \left(c_{i_0} + c'_{i_0} \frac{m \cdot s_{d_i}}{s_{d_i}+s_{d_j}} - c'_{i_0} i_0 \right)\\
          &=& \tilde{f}(S,T').
\end{eqnarray*}

The linear approximation will be done around the point $i_0 = m \cdot s_{d_i}/(s_{d_i}+s_{d_j})$.  This results in a simple form for the resulting approximation, and also the mass of the product of binomial coefficients concentrates around this point.  Set $\nu = s_{d_i}+s_{d_j}$ to examine the tradeoff between $s_{d_i}$ and $s_{d_j}$.

\begin{eqnarray*}
  \tilde{f}(S,T') &=& \binom{\nu }{m} \left( c_{\frac{m \cdot s_{d_i}}{\nu }} \right)\\
                  &=& \binom{\nu }{m} \frac{((\nu -s_{d_i})(\nu -m)+\nu )((s_{d_i}+1)\nu -m \cdot s_{d_i})}{\left( \begin{array}{l}p^u_{d_i} \nu (s_{d_i}+1)((\nu -s_{d_i})(\nu -m)+\nu ) + \\ p^u_{d_j}\nu (\nu -s_{d_i}+1)(\nu +s_{d_i}(\nu -m)) + \\ \beta((s_{d_i}+1)\nu -m \cdot s_{d_i})((\nu -s_{d_i})(\nu -m)+\nu ) \end{array} \right)}.
\end{eqnarray*}

Lemma~\ref{lem:d2f} in the Appendix shows that $\tilde{f}$ is convex in $s_{d_i}$.  Thus, the maximum of $\tilde{f}$ must exist at $s_{d_i}=0$ or $s_{d_i} = \nu$.  Observe that when $s_{d_i}=0$,
\begin{equation*}
\tilde{f} = \binom{\nu}{m} \frac{1-m+\nu}{p_{d_j}(1+\nu) + \beta(1 -m+\nu) +p_{d_i} (1-m+\nu )}
\end{equation*}
and when $s_{d_i} = \nu$
\begin{equation*}
\tilde{f} = \binom{\nu}{m} \frac{1-m+\nu}{p_{d_j}(1-m+\nu) + \beta(1 -m+\nu) +p_{d_i} (1+\nu )}.
\end{equation*}
Therefore, because $p_{d_i}\ge p_{d_j}$, $\tilde{f}$ is larger when $s_{d_i}=0$.  As stated, this implies that $f$ itself is maximized when $s_{d_i}=0$.\hfill

\end{proof}

Therefore, in looking for a maximum we can assume that every user except $u$ either always visits $d$ or always visits $d_{|\Delta|}$.  To examine how anonymity varies with the number of users in each category, we derive an asymptotic estimate for large $n$.  A focus on large $n$ is reasonable because anonymity networks, and onion routing in particular, are understood to have the best chance at providing anonymity when they have many users.  Furthermore, Tor is currently used by an estimated 500,000 people.

Let $\alpha = \{v\neq u : p^v_d = 1 \}/(n-1)$ be the fraction of users that always visit $d$.  Theorem~\ref{thm:de} gives an asymptotic estimate for the expected posterior probability given a constant $\alpha$.  It shows that, in the limit, the maximum expected posterior probability is obtained when all users but $u$ always visit $d$ or when they always visit $d_{|\Delta|}$.
\begin{theorem} \label{thm:de}
Assume that, for all $v\neq u$, either  $p^v_d=1$ or $p^v_{d_{|\Delta|}}=1$.  Then, if $\alpha = 0$,
\begin{equation*}
E[\Psi | X_D(u)=d] = b(1-b)p^u_d + b^2 + (1-b) \left( b+\frac{(1-b)^2 p^u_d }{1 - b + p^u_{d_{|\Delta|}} b} \right) + O\left(\sqrt{\frac{\log(n)}{n}} \right),
\end{equation*}
if $0<\alpha<1$
\begin{equation*}
E[\Psi | X_D(u)=d] = b(1-b)p^u_d + b^2 + (1-b) \frac{p^u_d }{1-b + p^u_d b + p^u_{d_{|\Delta|}}b} + O\left(\sqrt{\frac{\log(n)}{n}} \right),
\end{equation*}
and, if $\alpha=1$,
\begin{equation*}
E[\Psi | X_D(u)=d] = b(1-b)p^u_d + b^2 + (1-b) \frac{p^u_d}{1-b + p^u_d b} + O\left(\sqrt{\frac{\log(n)}{n}} \right).
\end{equation*}
\end{theorem}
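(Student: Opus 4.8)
The plan is to start from the deterministic-user expression in Equation~\ref{ey4}, peel off the two leading terms $b(1-b)p^u_d + b^2$ (which account for $u$'s input being observed and already appear verbatim in each target formula), and analyze the remaining weight $(1-b)\,E[\Psi \mid X_D(u)=d \land X_I(u)=0]$, exactly as in the decomposition used in the proof of Theorem~\ref{thm:lwrbnd}. Since every user $v \neq u$ deterministically visits either $d$ or $d_{|\Delta|}$, the posterior $\Psi_1(S,T)$ depends on the configuration only through four counts: the numbers $s_d, s_{d_{|\Delta|}}$ of unobserved-input users other than $u$ visiting $d$ and $d_{|\Delta|}$, together with the numbers $t_d, t_{d_{|\Delta|}}$ of output-observed circuits of each destination type. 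Reusing the expressions for $f_\delta$ and $f_0$ from the proof of Lemma~\ref{ef} and simplifying the binomial ratio $\binom{s}{t-1}/\binom{s}{t} = t/(s-t+1)$, the posterior collapses to
\[
\Psi_1 = \frac{p^u_d\left(\tfrac{t_d}{s_d - t_d + 1} + 1\right)}{p^u_d\,\tfrac{t_d}{s_d - t_d + 1} + p^u_{d_{|\Delta|}}\,\tfrac{t_{d_{|\Delta|}}}{s_{d_{|\Delta|}} - t_{d_{|\Delta|}} + 1} + 1},
\]
so the whole problem reduces to understanding the two ratios $t_\delta/(s_\delta - t_\delta + 1)$ as $n$ grows.

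Next I would set up the concentration step. Each of the $\alpha(n-1)$ users that always visit $d$ lands independently in one of three states --- unobserved input with observed output, unobserved input with unobserved output, or observed input --- with probabilities $b(1-b)$, $(1-b)^2$, and $b$; the $(1-\alpha)(n-1)$ users visiting $d_{|\Delta|}$ behave identically. Hence $t_d$, $s_d - t_d$, $t_{d_{|\Delta|}}$, and $s_{d_{|\Delta|}} - t_{d_{|\Delta|}}$ are sums of independent indicators, and a Chernoff/Hoeffding bound shows each deviates from its mean by more than $O(\sqrt{n\log n})$ only on an event of probability $O(1/\operatorname{poly}(n))$. On the complementary \emph{typical} event, $t_\delta/(s_\delta - t_\delta + 1) \to b/(1-b)$ whenever $\delta$ is chosen by a $\Theta(n)$ fraction of users, while the ratio is exactly $0$ when no user visits $\delta$.

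I would then treat the three regimes by substituting these limits. When $0 < \alpha < 1$ both $d$ and $d_{|\Delta|}$ are visited by $\Theta(n)$ users, both ratios tend to $b/(1-b)$, and substitution gives $\Psi_1 \to p^u_d/(1 - b + p^u_d b + p^u_{d_{|\Delta|}} b)$; here whether $u$'s \emph{own} output is observed shifts $t_d$ by at most one and is swamped by $t_d = \Theta(n)$, so no case split is needed. When $\alpha = 1$ we have $s_{d_{|\Delta|}} = 0$, the second ratio vanishes, and $\Psi_1 \to p^u_d/(1 - b + p^u_d b)$. The case $\alpha = 0$ is genuinely different: no user other than $u$ visits $d$, so $u$'s own observation toggles $t_d$ between $0$ and $1$ and cannot be ignored. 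Splitting on it, with probability $b$ the single observed $d$-output must be $u$'s (posterior $1$), and with probability $1-b$ no $d$ is observed and $\Psi_1 \to (1-b)p^u_d/(1 - b + p^u_{d_{|\Delta|}} b)$ through the surviving $d_{|\Delta|}$ ratio; combining gives $b + (1-b)^2 p^u_d/(1 - b + p^u_{d_{|\Delta|}} b)$. Multiplying each limit by $(1-b)$ and restoring $b(1-b)p^u_d + b^2$ reproduces the three stated expressions.

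Finally I would assemble the error term. The atypical event contributes $O(1/\operatorname{poly}(n))$ because $\Psi_1 \le 1$, and on the typical event the relative count fluctuations of size $O(\sqrt{\log n / n})$ propagate through the ratios and the bounded function $\Psi_1$ --- whose denominator stays bounded below --- to yield an $O(\sqrt{\log n / n})$ error, which dominates. The main obstacle is precisely this propagation estimate: I must verify that the nonlinear map from the concentrated counts to $\Psi_1$ is Lipschitz on the typical regime and that each denominator $s_\delta - t_\delta + 1$ stays $\Theta(n)$ for every destination $\delta$ carrying positive mass, so that the $O(\sqrt{\log n/n})$ count fluctuations translate uniformly into a matching additive error. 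The boundary cases $\alpha \in \{0,1\}$, where a count collapses to $0$ or $O(1)$ and the limiting value genuinely changes, are where this bookkeeping is most delicate and where the explicit split on $u$'s own output is essential.
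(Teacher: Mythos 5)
Your proposal is correct and follows essentially the same route as the paper: reduce $\Psi$ to an explicit rational function of the binomial counts of observed/unobserved circuits for $d$ and $d_{|\Delta|}$ (your ratio form $t_\delta/(s_\delta-t_\delta+1)\to b/(1-b)$ is exactly the paper's $\Psi_2$ after cancelling binomial coefficients), concentrate those counts via Chernoff bounds, evaluate at the means, and split on $u$'s own output only in the degenerate $\alpha=0$ case. The only organizational difference is that you concentrate all four counts in one stage where the paper does it in two nested stages (first $j,k$ given $e,f$, then $e,f$) and controls the error via monotonicity of $\Psi_2$ in each variable rather than a Lipschitz estimate; both yield the same $O(\sqrt{\log n/n})$ term.
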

\begin{proof}
Let $n_e = \alpha(n-1)$ and $n_f = (1-\alpha)(n-1)$.  The expected posterior probability can be given in the following variation on Equation~\ref{ey4}:
\begin{equation} \label{eq:yasymp}
\begin{split}
  &E[\Psi | X_D(u)=d] = b(1-b)p^u_d + b^2 + (1-b) \cdot\\
  &\qquad \sum_{e=0}^{n_e} \binom{n_e}{e} (1-b)^e b^{n_e-e} \sum_{f=0}^{n_f} \binom{n_f}{f} (1-b)^f b^{n_f-f} \cdot\\
  &\qquad \sum_{j=0}^f \binom{f}{j} b^j (1-b)^{f-j} \sum_{k=0}^{e} \binom{e}{k} b^k (1-b)^{e-k}\cdot \\
  &\qquad \qquad \left[ b \Psi_2(e,f,j,k+1) + (1-b)\Psi_2(e,f,j,k)  \right].
\end{split}
\end{equation}
Here $\Psi_2(e,f,j,k)$ is the value of $\Psi$ when the users with unobserved inputs consist of $u$, $e$ users $v\neq u$ with $p^v_d = 1$, and $f$ users $v\neq u$ with $p^v_{d_{|\Delta|}}=1$; and the users with unobserved inputs and observed outputs consist of $k$ users $v$ with $X_D(v)=d$ and $j$ users $v$ with $X_D(v)=d_{|\Delta|}$.  Given such a configuration, the number of indistinguishable configurations in which $u$ has observed destination $d$ is $\binom{e}{k-1} \binom{f}{j}$, the number of indistinguishable configurations in which $u$ has observed destination $d_{|\Delta|}$ is $\binom{e}{k} \binom{f}{j-1}$, and the number of indistinguishable configuration in which $u$ has an unobserved destination is $\binom{e}{k} \binom{f}{j}$.  Thus, we can express $\Psi_2$ as
\begin{equation*}
\Psi_2(e,f,j,k) =  \frac{p^u_d \binom{e}{k-1}\binom{f}{j} + p^u_d \binom{e}{k}\binom{f}{j}}{p^u_d \binom{e}{k-1}\binom{f}{j} + p^u_{d_{|\Delta|}} \binom{e}{k}\binom{f}{j-1} + \binom{e}{k}\binom{f}{j}}.
\end{equation*}
The binomial coefficients largely cancel, and so we can simplify this equation to
\begin{equation*}
\Psi_2(e,f,j,k) = \frac{p^u_d (e+1)(f-j+1)}{p^u_d k (f-j+1) + p^u_{d_{|\Delta|}} j(e-k+1) + (e-k+1)(f-j+1) } \label{eq:y}.
\end{equation*}

We observe that $j$ and $k$ are binomially distributed.  Therefore, by the Chernoff bound, they concentrate around their means as $e$ and $f$ grow.  Let $\mu_1 = fb$ be the mean of $j$ and $\mu_2 = eb$  be the mean of $k$.  We can approximate the tails of the sums over $j$ and $k$ in Equation~\ref{eq:yasymp} and sum only over the central terms:
\begin{equation} \label{eq:yasymp2}
\begin{split}
  &E[\Psi | X_D(u)=d] = b(1-b)p^u_d + b^2 +\\
  &\qquad (1-b) \sum_{e=0}^{n_e} \binom{n_e}{e} (1-b)^e b^{n_e-e} \sum_{f=0}^{n_f} \binom{n_f}{f} (1-b)^f b^{n_f-f} \cdot\\
  &\qquad \bigg[ O\left(\exp(-2 c_1)\right) + O\left( \exp(-2 c_2)\right) +\\
  &\qquad \qquad \sum_{j: |j-\mu_1| < \sqrt{c_1 f}} \binom{f}{j} b^j (1-b)^{f-j} \sum_{k: |k-\mu_2|<\sqrt{c_2 e}} \binom{e}{k} b^k (1-b)^{e-k}\cdot\\
  &\qquad \qquad\left( b \Psi_2(e,f,j,k+1) + (1-b)\Psi_2(e,f,j,k) \right) \bigg].  
\end{split}
\end{equation}

As $j$ and $k$ concentrate around their means, $\Psi_2$ will approach its value at those means.  Let
\begin{equation*}
\varepsilon_1(j,k,u) = \Psi_2(e,f,j,k+u) - \Psi_2(e,f,\mu_1,\mu_2+u)
\end{equation*}
be the difference of $\Psi_2$ from its value at $\mu_1$ and $\mu_2$+u, where $u\in \{0,1\}$ indicates if $u$'s output is observed.

$\Psi_2$ is non-increasing in $j$ and is non-decreasing in $k$:
\begin{align*}
D_j \Psi_2 &= -\frac{(1+e) (1+f) (1+e-k) p^u_{d_{|\Delta|}} p^u_d}{\left( \begin{array}{l}(1+f) (1+e-k) p^u_{d_{|\Delta|}} + \\(1+f-j-u) (p^u_d (e+1) +(1-p^u_d-p^u_{d_{|\Delta|}})(1+e-k)) \end{array} \right)^2}\\
&\le 0.\\
D_k \Psi_2 &= \frac{(1+e) (1+f-j) p^u_d  (p^u_{d_{|\Delta|}}(1+f)+(1-p^u_d-p^u_{d_{|\Delta|}})(1+f-j))}{\left( \begin{array}{l} (1+f) (1+e-k-u)p^u_{d_{|\Delta|}} +\\ (1+f-j) ((1+e) p^u_d +(1+e-k-u) (1-p^u_d-p^u_{d_{|\Delta|}}) ) \end{array} \right)^2}\\
&\ge 0.
\end{align*}

Because the signs of these derivatives are constant, the magnitude of $\varepsilon_1$ is largest when $j$ and $k$ are as large or as small as possible.  We can therefore bound the magnitude of $\varepsilon_1$ with
\begin{align*}
&\max_{\substack{\sigma\in \{-1,1\}\\ u\in \{0,1\} }}\left(\left| \varepsilon_1 \left(\mu_1 + \sigma \sqrt{c_1 f},\mu_2 + \sigma \sqrt{c_2 e},u \right) \right| \right)\\
&\qquad = \max_{\substack{\sigma\in \{-1,1\}\\ u\in \{0,1\} }} \left| \Psi_2(e,f,\mu_1 + \sigma \sqrt{c_1 f},\mu_2 + \sigma \sqrt{c_2 e}+u) - \Psi_2(e,f,\mu_1,\mu_2+u) \right|\\
&\qquad = O\left( \sqrt{c_1/f} \right) + O\left( \sqrt{c_2/e} \right),
\end{align*}
where the second line follows from a simple expansion of $\Psi_2$ according to Equation~\ref{eq:y}.  We use this estimate to approximate the value of $\Psi_2$:
\begin{align*}
\Psi_2(e,f,j,k+u) &= \Psi_2(e,f,\mu_1,\mu_2+u) + \varepsilon_1(j,k,u)\\
&= \Psi_2(e,f,\mu_1,\mu_2+u) + O\left( \sqrt{c_1/f} \right) + O\left( \sqrt{c_2/e} \right).
\end{align*}
We set $c_1 = \log(f)/4$ and $c_2 = \log(e)/4$, and then Equation~\ref{eq:yasymp2} becomes
\begin{equation} \label{eq:yasymp2.5}
\begin{split}
  &E[\Psi | X_D(u)=d] = b(1-b)p^u_d + b^2 + \\
  &\qquad (1-b) \sum_{e=0}^{n_e} \binom{n_e}{e} (1-b)^e b^{n_e-e} \sum_{f=0}^{n_f} \binom{n_f}{f} (1-b)^f b^{n_f-f} \cdot\\
  &\qquad \bigg[ b \Psi_2(e,f,\mu_1,\mu_2+1) + (1-b)\Psi_2(e,f,\mu_1,\mu_2) +\\
  &\qquad \qquad O\left( \sqrt{\log(f)/f} \right) + O\left( \sqrt{\log(e)/e} \right) \bigg].
\end{split}
\end{equation}

$e$ and $f$ in this expression are binomially distributed.  Let $\mu_3 = n_e(1-b)$ be the mean of $e$ and $\mu_4 = n_f(1-b)$ be the mean of $f$.  By applying the Chernoff bound to the sum over $e$, setting the tails to start at $\min(b,1-b)n_e/2$ from $\mu_3$, we can see that
\begin{equation*}
\sum_{e=0}^{n_e} \binom{n_e}{e} (1-b)^e b^{n_e-e} \sum_{f=0}^{n_f} \binom{n_f}{f} (1-b)^f b^{n_f-f} O\left(\sqrt{\log(e)/e}\right) = O\left(\sqrt{\log(n_e)/n_e} \right).
\end{equation*}
We can similarly show that
\begin{equation*}
\sum_{e=0}^{n_e} \binom{n_e}{e} (1-b)^e b^{n_e-e} \sum_{f=0}^{n_f} \binom{n_f}{f} (1-b)^f b^{n_f-f} O\left(\sqrt{\log(f)/f} \right) = O\left(\sqrt{\log(n_f)/n_f} \right).
\end{equation*}

For the remaining terms inside both sums, approximate the sums over $e$ and $f$ using the Chernoff bound by setting the tails to be those terms more than $\sqrt{c_3 n_e}$ from $\mu_3$ and more than $\sqrt{c_4 n_f}$ from $\mu_4$, respectively. This yields
\begin{equation} \label{eq:yasymp3}
\begin{split}
  &E[\Psi | X_D(u)=d] = b(1-b)p^u_d + b^2 +\\
  &\qquad O\left( \left( log(n_e)/n_e \right)^{-1/2} \right) + O\left( \left( log(n_f)/n_f \right)^{-1/2} \right) + O\left( e^{-2 c_3} \right) + O\left( e^{-2 c_4} \right) + \\
  &\qquad (1-b) \sum_{e: |e-\mu_3|<\sqrt{c_3 n_e}} \binom{n_e}{e} (1-b)^e b^{n_e-e} \sum_{f: |f-\mu_4| < \sqrt{c_4 n_f}} \binom{n_f}{f} (1-b)^f b^{n_f-f} \cdot\\
  &\qquad \qquad \left[ b \Psi_2(e,f,\mu_1,\mu_2+1) + (1-b)\Psi_2(e,f,\mu_1,\mu_2) \right].
\end{split}
\end{equation}

As $e$ and $f$ concentrate around their means, $\Psi_2$ will approach its value at those means.  Let
\begin{equation*}
\varepsilon_2(e,f,u) = \Psi_2(e,f,\mu_1,\mu_2+u) - \Psi_2(\mu_3,\mu_4,\mu_1,\mu_2+u)
\end{equation*}
be the difference of $\Psi_2$ from its value at $e=\mu_3$ and $f=\mu_4$, $u\in \{0,1\}$.  $\Psi_2(e,f,\mu_1,\mu_2)$ in non-decreasing with respect to $e$:
\begin{align*}
D_e \Psi_2(e,f,\mu_1,\mu_2) &= \frac{(1+(1-b) f) b p^u_d  ((f+1)(1-p^u_d) - fb(1-p^u_d-p^u_{d_{|\Delta|}}))}{\left( \begin{array}{l} (1+(1-b)f)(1+(1-b)e) + \\ (1+(1-b) f)(be)p^u_d + \\ b f (1+(1-b) e+u) p^u_{d_{|\Delta|}} \end{array} \right)^2}\\
&\ge 0.
\end{align*}
$\Psi_2(e,f,\mu_1,\mu_2+1)$ is non-increasing with respect to $e$:
\begin{align*}
D_e \Psi_2(e,f,\mu_1,\mu_2) &= \frac{(1+(1-b) f) (1-b) p^u_d  (fb(1-p^u_{d_{|\Delta|}}-p^u_d)-(f+1)(1-p^u_d))}{\left( \begin{array}{l} ((1-b)f)(1+(1-b)e) + \\ (1+(1-b) f)(be+1)p^u_d + \\ b f ((1-b) e) p^u_{d_{|\Delta|}} \end{array} \right)^2}\\
&\le 0.
\end{align*}
$\Psi_2(e,f,\mu_1,\mu_2+u)$, $u\in \{0,1\}$, is non-increasing with respect to $f$:
\begin{align*}
D_f \Psi_2(e,f,\mu_1,\mu_2+u) &= \frac{ - b(1+e)(1+(1-b) e+u) p^u_d p^u_{d_{|\Delta|}} }{\left( \begin{array}{l} (1+(1-b)f)(1+(1-b)e+u) +\\  (1+(1-b)f)(be+u)p^u_d + \\ bf(1+(1-b)e+u)p^u_{d_{|\Delta|}} \end{array} \right)^2}\\
&\le 0.
\end{align*}

Therefore, the magnitude of $\varepsilon_2$ is largest when $e$ and $f$ are as large or as small as possible.  We can therefore estimate the magnitude of $\varepsilon_2$ with
\begin{align*}
\max_{ \substack{\sigma\in\{-1,1\}\\ u\in \{0,1\} }} \left( \left| \varepsilon_2 \left(\mu_3+\sigma\sqrt{c_3 n_e},\mu_4+\sigma\sqrt{c_4 n_f},u \right) \right| \right).
\end{align*}
If $n_e, n_f \neq 0$,
\begin{align*}
\varepsilon_2 \left(\mu_3+\sigma\sqrt{c_3 n_e},\mu_4+\sigma\sqrt{c_4 n_f},u \right) = &\Psi_2(\mu_3+\sigma\sqrt{c_3 n_e},\mu_4+\sigma\sqrt{c_4 n_f},\mu_1,\mu_2+u) - \\
&\Psi_2(\mu_3,\mu_4,\mu_1,\mu_2+u)\\
= &O\left( \sqrt{c_3/n_e} \right) + O\left( \sqrt{c_4/n_f} \right) .
\end{align*}
If $n_e = 0$, which occurs when $\alpha = 0$,
\begin{align*}
\varepsilon_2 \left(0,\mu_4+\sigma\sqrt{c_4 n_f},u \right) &= \Psi_2(0,\mu_4+\sigma\sqrt{c_4 n_f},\mu_1,u) - \Psi_2(0,\mu_4,\mu_1,u)\\
&= O\left( \sqrt{c_4/n_f} \right).
\end{align*}
If $n_f = 0$, which occurs when $\alpha = 1$, the final term becomes
\begin{align*}
\varepsilon_2 \left(\mu_3+\sigma\sqrt{c_3 n_e},0,u \right) &= \Psi_2(\mu_3+\sigma\sqrt{c_3 n_e},0,0,\mu_2+u) - \Psi_2(\mu_3,0,0,\mu_2+u)\\
&= O\left( \sqrt{c_3/n_e} \right).
\end{align*}
These asymptotic estimates of $\varepsilon_2$ follow from a simple expansion of $\Psi_2$ according to Equation~\ref{eq:y}.

We use them estimate to approximate the value of $\Psi_2$ as $e$ and $f$ grow:
\begin{align*}
\Psi_2(e,f,\mu_1,\mu_2+u) &= \Psi_2(\mu_3,\mu_4,\mu_1,\mu_2+u) + \varepsilon_2(e,f,u)\\
&= \Psi_2(\mu_3,\mu_4,\mu_1,\mu_2+u) + O\left( \sqrt{c_3/n_e} \right) + O\left( \sqrt{c_4/n_f} \right).
\end{align*}
We set $c_3 = \log(n_e)/4$ and $c_4 = \log(n_f)/4$, and then Equation~\ref{eq:yasymp3} becomes
\begin{equation} \label{eq:yasymp4}
\begin{split}
&E[\Psi | X_D(u)=d] = b(1-b)p^u_d + b^2 + \\
&\qquad (1-b) \left[ b \Psi_2(\mu_3,\mu_4,\mu_1,\mu_2+1) + (1-b) \Psi_2(\mu_3,\mu_4,\mu_1,\mu_2) \right] + \\
&\qquad O\left( \left( log(n_e)/n_e \right)^{-1/2} \right) + O\left( \left( log(n_f)/n_f \right)^{-1/2} \right).
\end{split}
\end{equation}

Finally, we must estimate $\Psi_2(\mu_3,\mu_4,\mu_1,\mu_2+u)$, $u\in \{0,1\}$.  Assume that $0<\alpha<1$ and thus that $n_e = \alpha (n-1)$ and $n_f = (1-\alpha)(n-1)$ are both increasing with $n$.  Then
\begin{align*}
\Psi_2(\mu_3,\mu_4,\mu_1,\mu_2+u) &= \Psi_2((1-b)n_e,(1-b)n_f,b(1-b)n_f,b(1-b)n_e+u)\\
&= \frac{p^u_d (1-b)^3 n_e n_f + c_1 n_e + c_2 n_f + c_3 }{\left( \begin{array}{l} ((1-b)^4 + p^u_d (1-b)^3 b + p^u_{d_{|\Delta|}}(1-b)^3 b) n_e n_f +\\ c_4 n_e + c_5 n_f + c_6 \end{array} \right)}\\
&= \frac{p^u_d }{1-b + p^u_d b + p^u_{d_{|\Delta|}}b} + O(1/n_e) + O(1/n_f) + O(1/(n_e n_f)),
\end{align*}
where $c_1,\ldots ,c_6$ are some values constant in $n_e$ and $n_f$.
When $\alpha=0$, then $n_e=0$, and the estimate becomes
\begin{align*}
\Psi_2(\mu_3,\mu_4,\mu_1,\mu_2+u) &= \Psi_2(0,(1-b)n_f,b(1-b)n_f,u)\\
&= \frac{p^u_d (1-b) n_f + c_1}{ ((1-u)(1-b) + p^u_d u (1-b) + p^u_{d_{|\Delta|}} (1-u)b ) n_f + c_2}\\
&= \frac{p^u_d (1-b)}{ ((1-u)(1-b) + p^u_d u (1-b) + p^u_{d_{|\Delta|}} (1-u)b ) } + O(1/n_f),
\end{align*}
where $c_1,c_2$ are some values constant in $n_f$.
When $\alpha=1$, then $n_f=0$, and the estimate becomes
\begin{align*}
\Psi_2(\mu_3,\mu_4,\mu_1,\mu_2+u) &= \Psi_2((1-b)n_e,0,0,b(1-b)n_e+u)\\
&= \frac{p^u_d n_e + c_1}{((1-b) + p^u_d b)n_e +c_2} \\
&= \frac{p^u_d}{1-b + p^u_d b} + O(1/n_e),
\end{align*}
where $c_1,c_2$ are some values constant in $n_e$.

Inserting these estimates for $\Psi_2(\mu_3,\mu_4,\mu_1,\mu_2+u)$ into Equation~\ref{eq:yasymp4} yields the theorem.
\end{proof}

It follows from this theorem that the worst case anonymity over user distributions occurs either when all users always visit $d_{|Delta|}$ or when all users always visit $d$.
\begin{corollary} \label{cor:worst}
$\lim_{n\rightarrow \inf} E[\Psi | X_D(u)=d]$ is maximized either at $\alpha=0$ or at $\alpha=1$.
\end{corollary}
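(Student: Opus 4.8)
The plan is to invoke Theorem~\ref{thm:de} directly and reduce the corollary to a one-line algebraic comparison. Write $g(\alpha)$ for the limit $\lim_{n\to\infty} E[\Psi \mid X_D(u)=d]$ viewed as a function of the fraction $\alpha$ of users $v\neq u$ with $p^v_d=1$. Theorem~\ref{thm:de} furnishes $g$ explicitly in the three regimes $\alpha=0$, $0<\alpha<1$, and $\alpha=1$. The crucial structural observation is that throughout the open interval the limit does not depend on $\alpha$: for every $0<\alpha<1$ it equals the single constant
\[
g_{\mathrm{int}} = b(1-b)p^u_d + b^2 + (1-b)\,\frac{p^u_d}{1-b+p^u_d b + p^u_{d_{|\Delta|}}b}.
\]
Consequently $g$ takes at most three values over $[0,1]$, and to establish the corollary it suffices to show that this interior constant never strictly exceeds both endpoint values $g(0)$ and $g(1)$.

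Next I would strip away the parts common to all three formulas. Each of $g(0)$, $g_{\mathrm{int}}$, and $g(1)$ shares the leading terms $b(1-b)p^u_d+b^2$ and the same prefactor $(1-b)$ on the remaining term. Assuming $b<1$ (the case $b=1$ being trivial, since then $\Psi\equiv 1$ and $g$ is constant), dividing out $(1-b)$ reduces the problem to comparing
\[
V_0 = b + \frac{(1-b)^2 p^u_d}{1-b+p^u_{d_{|\Delta|}}b}, \qquad V_{\mathrm{int}} = \frac{p^u_d}{1-b+p^u_d b + p^u_{d_{|\Delta|}}b}, \qquad V_1 = \frac{p^u_d}{1-b+p^u_d b},
\]
and showing $V_{\mathrm{int}} \le \max(V_0, V_1)$.

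The comparison I expect to close the argument at once is $V_{\mathrm{int}} \le V_1$. Both are ratios with numerator $p^u_d\ge 0$ and strictly positive denominators, and the denominator of $V_{\mathrm{int}}$ exceeds that of $V_1$ by exactly the nonnegative quantity $p^u_{d_{|\Delta|}}b$. Hence the interior limit is dominated by the value at $\alpha=1$, so $g(\alpha)\le g(1)$ for every $\alpha\in(0,1)$, and therefore $\max_{\alpha\in[0,1]} g(\alpha) = \max\{g(0),g(1)\}$ is attained at an endpoint, as claimed.

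Because the whole statement rests on the explicit formulas from Theorem~\ref{thm:de}, there is no genuine analytic obstacle here; the only point requiring care is recognizing that the $0<\alpha<1$ formula is truly $\alpha$-independent, which collapses an optimization over a continuum into a comparison of three numbers. I would also note that $V_{\mathrm{int}}\le V_1$ alone already suffices, without ever examining $V_0$: it pins the interior value below one of the two endpoints regardless of how $V_0$ compares, so the comparison of $g(0)$ and $g(1)$ need not even be resolved to conclude that the maximizer is an endpoint.
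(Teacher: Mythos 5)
Your proposal is correct and follows essentially the same route as the paper: the paper's proof likewise invokes Theorem~\ref{thm:de} and concludes from the single inequality $\frac{p^u_d}{1-b+p^u_d b + p^u_{d_{|\Delta|}}b} \le \frac{p^u_d}{1-b+p^u_d b}$ that the interior value is dominated by the $\alpha=1$ value, so the maximum lies at an endpoint. Your additional remarks (the $\alpha$-independence of the interior formula, and that $V_0$ never needs to be examined) are just a more explicit write-up of the same argument.
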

\begin{proof}
The case $\alpha=1$ is larger in the limit than the case where $0<\alpha < 1$, by Thm.~\ref{thm:de}, because
\begin{equation*}
 \frac{p^u_d }{1-b + p^u_d b + p^u_{d_{|\Delta|}}b} \le \frac{p^u_d}{1-b + p^u_d b}.
\end{equation*}
\end{proof}

The case $\alpha=1$ is the worst case only when
\begin{equation*}
p^u_{d_{|\Delta|}} \ge \frac{(1-b)(1-p^u_d)^2}{p^u_d(1+b)-b}.
\end{equation*}
This happens when $p^u_d\ge 1/2$ and $p^u_{d_{|\Delta|}}$ is near $1-p^u_d$. That is, if the user is likely to visit $d$ and the other users can't distinguish themselves too much, then it is worst to have them always visit $d$ because the adversary will blame $u$.

However, we would expect $p^u_{d_{|\Delta|}}$ to be small because it is at most $1/|\Delta|$. In this case the worst-case limiting distribution has $\alpha = 0$, that is, it is worst when the other users always act very different from $u$ by visiting $d_{|\Delta|}$. Then the expected assigned probability is about $b + (1-b)p^u_d$. This is equal to the lower bound on the anonymity metric when the adversary controls a fraction $\sqrt{b}$ of the network.

\section{Typical Distributions} \label{sec:typdist}
It is unlikely that users of onion routing will ever find themselves in the worst-case situation.  The necessary distributions just do not resemble what we expect user behavior to be like in any realistic use of onion routing.  Our worst-case analysis may therefore be overly pessimistic.  To get some insight into the anonymity that a typical user of onion routing can expect, we consider a more realistic set of users' destination distributions in which each user selects a destination from a common Zipfian distribution.  This model of user behavior is used by Shmatikov and Wang~\shortcite{ShWa-Relationship} to analyze relationship anonymity in mix networks and is motivated by observations that the popularity of sites on the web follows a Zipfian distribution.

Let each user select his destination from a common Zipfian distribution $p$:  $p_{d_i} = 1/(\mu i^s)$, where $s>0$ and $\mu=\sum_{i=1}^{|\Delta|} 1/i^s$.  It turns out that the exact form of the distribution doesn't matter as much as the fact that it is common among users.

\begin{theorem}
When $p^v = p^w$, for all $v,w\in U$,
\begin{displaymath}
E[\Psi|X_D(u)=d] = b^2+(1-b^2)p^u_d + O(1/n)
\end{displaymath}
\end{theorem}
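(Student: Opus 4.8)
The plan is to isolate the one genuinely hard conditional expectation and then exploit the exchangeability that the common distribution creates among the users with unobserved inputs. First I would reuse the decomposition from the proof of Theorem~\ref{thm:lwrbnd}, namely $E[\Psi \mid X_D(u)=d] = b^2 + b(1-b)p^u_d + (1-b)\,E[\Psi \mid X_D(u)=d \land X_I(u)=0]$. Since $b^2 + b(1-b)p^u_d + (1-b)p^u_d = b^2 + (1-b^2)p^u_d$, it suffices to show $E[\Psi \mid X_D(u)=d \land X_I(u)=0] = p^u_d + O(1/n)$. Theorem~\ref{thm:lwrbnd} already supplies the $\ge p^u_d$ direction, so the new content is the matching upper bound; in fact the argument below yields equality up to an $O(1/n)$ term.

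The crux is a closed form for $\Psi$ in the homogeneous case when $u$'s input is unobserved. Here the adversary's view is the set $S\ni u$ of unobserved inputs, with $|S|=s$, together with the multiset $\Delta^0$ of observed destinations attached to $S$-circuits, with $|\Delta^0|=k$. Because every user draws i.i.d.\ from the common $p$ and circuits with unobserved input are unlinkable, the $s$ circuits in $S$ are exchangeable from the adversary's standpoint, and $u$ is equally likely to be any of them. A short Bayesian computation (splitting on whether $u$ is one of the $k$ observed-output circuits or one of the $s-k$ unobserved-output circuits, the normalizing factor $\Pr[\{X_v\}_{B}=\Delta^0]$ cancelling) then gives $\Psi(C) = \bigl(m_d + (s-k)\,p^u_d\bigr)/s$, where $m_d$ is the multiplicity of $d$ in $\Delta^0$: an observed-output circuit contributes its exact destination, an unobserved-output circuit contributes the prior $p^u_d$. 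One can also obtain this by simplifying Equation~\ref{yform} when all the $p^v$ coincide.

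With this form in hand the expectation is routine. Conditioning on $s$ fixes the denominator, and the numerator $m_d + (s-k)p^u_d$ is linear in the counts, so by linearity $E[\Psi \mid s, X_D(u)=d, X_I(u)=0] = \bigl(E[m_d\mid s] + p^u_d(s - E[k\mid s])\bigr)/s$. Now $E[k\mid s] = sb$, since each of the $s$ circuits has its output observed independently with probability $b$, while $E[m_d \mid s, X_D(u)=d] = b + (s-1)b\,p^u_d$, because $u$'s destination is $d$ and is observed with probability $b$, and each of the other $s-1$ circuits in $S$ is observed with probability $b$ and lands on $d$ with probability $p^u_d$. Substituting collapses this to $E[\Psi \mid s, X_D(u)=d, X_I(u)=0] = p^u_d + b(1-p^u_d)/s$.

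It remains to average over $s$. Since $u\in S$ and each of the other $n-1$ users lies in $S$ independently with probability $1-b$, we have $s = 1 + \mathrm{Bin}(n-1,1-b)$, so $s\ge 1$ always and $s=\Theta(n)$ off an exponentially small event. A Chernoff bound (splitting the expectation at, say, $s < (1-b)n/2$) therefore gives $E[1/s] = O(1/n)$, whence $E[\Psi \mid X_D(u)=d \land X_I(u)=0] = p^u_d + b(1-p^u_d)\,E[1/s] = p^u_d + O(1/n)$, and feeding this back into the decomposition yields the theorem. I expect the main obstacle to be establishing the closed form for $\Psi$ cleanly: the exchangeability argument must correctly combine the case in which $u$'s own output is observed with the case in which it is not, and this is where a careless Bayesian computation would drop or misweight a term. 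Once that form is secured the rest is mechanical, the only other point needing care being the $E[1/s]=O(1/n)$ estimate, which requires the Chernoff tail because $1/s$ is not pointwise $O(1/n)$.
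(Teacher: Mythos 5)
Your proposal is correct and takes essentially the same route as the paper's proof: both exploit the exchangeability of the common distribution to get the closed form $\Psi = \bigl(m_d + (s-k)p^u_d\bigr)/s$ for the posterior, apply linearity of expectation to reduce it to $p^u_d + b(1-p^u_d)/s$, and then bound $E[1/s]$. The only cosmetic difference is that the paper evaluates $E[1/s]$ exactly in closed form where you invoke a Chernoff tail bound; both yield the $O(1/n)$ term.
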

\begin{proof}
Let $p$ be the common destination distribution.  The expected assigned probability can be expressed as:
\begin{multline} \label{eq:typdist}
E[\Psi|X_D(u)=d] = b^2 + b(1-b)p^u_d +\\
(1-b)\sum_{s=1}^n b^{n-s} (1-b)^{s-1} \sum_{t=0}^{s} (1-b)^{s-t} b^t \binom{n-1}{s-1} \cdot \\
\left[\binom{s-1}{t-1} \sum_{\Delta \in D^t:\Delta_1=d} \prod_{i=2}^t p_{\Delta_i} \Psi_4(s,\Delta) + \binom{s-1}{t} \sum_{\Delta \in D^t} \prod_{i=1}^t p_{\Delta_i} \Psi_4(s,\Delta) \right].
\end{multline}

Here, $s$ represents the size of the set of users with unobserved inputs, $t$ represents the size of the subset of those $s$ users that also have observed outputs, $\Delta$ represents the $t$ observed destinations, and $\Psi_4(s, \Delta)$ is the posterior probability.  In this situation, $\Psi$ is unambiguous given $s$ and $\Delta$.  Let $\Delta_d = |\{x\in \Delta : x=d\}|$.  $\Psi_4$ can be expressed simply as:
\begin{align*}
\Psi_4(s,\Delta) &= \frac{\Delta_d (s-1)^{|\Delta|-1} + p_d (s-1)^{|\Delta|}}{s^{|\Delta|}}\\
&= (\Delta_d + p_d(s-t))/s.
\end{align*}

The sum
\begin{equation*}
\sum_{\Delta \in D^t:\Delta_1=d} \prod_{i=2}^t p_{\Delta_i} \Psi_4(s,\Delta)
\end{equation*}
in Equation~\ref{eq:typdist} calculates the expectation for $\Psi_4$ conditioned on $s$ and $t$.  The expression for $\Psi_4$ shows that this expectation depends linearly on the expected value of $\Delta_d$.  $\Delta_d$'s expectation is simply $1+p_d(t-1)$, because one destination in this case is always $d$, and each of the other $t-1$ is $d$ with probability $p_d$.  The sum
\begin{equation*}
\sum_{\Delta \in D^t} \prod_{i=1}^t p_{\Delta_i} \Psi_4(s,\Delta)
\end{equation*}
in Equation~\ref{eq:typdist} similarly depends linearly on the expectation of $\Delta_d$, which in this case is $p_d t$.

With these observations, it is a straightforward calculation to show that the sum over $t$ in Equation~\ref{eq:typdist} is simply
\begin{align*}
b\frac{p_d (s-1)+1}{s} + (1-b) p_d.
\end{align*}

We insert this into Equation~\ref{eq:typdist} and simplify:
\begin{align*}
E[\Psi|X_D(u)=d] = &b^2 + b(1-b)p^u_d + \\
&(1-b)\sum_{s=1}^n b^{n-s} (1-b)^{s-1} \binom{n-1}{s-1} \left[ b\frac{p_d (s-1)+1}{s} + (1-b) p_d \right]\\
=  &b^2 + b(1-b)p^u_d + \\
&(1-b)\left[b\left(p_d + \frac{(1-p_d)(1-(1-b)^{n+1})}{b(n+1)}\right) + (1-b)p_d \right]\\
=  &b^2 + (1-b^2)p^u_d + O(1/n).
\end{align*}
\end{proof}

Our results show that the expected value of the anonymity metric is close to $b^2 + (1-b^2)p^u_d$ for large populations.  This amount matches the lower bound shown in Thm.~\ref{thm:lwrbnd}.

\section{Conclusions and Future Work}
\label{conclusions}
We expect each user of an anonymity network to have a pattern of use.  In order to make guarantees to the user about his anonymity, we need to take this into account when modeling and analyzing the system, especially in light of previous research that indicates that an adversary can learn these usage patterns given enough time.

We perform such an analysis on onion routing.  Onion routing is a successful design used, in the form of the Tor system, by hundreds of thousands of people to protect their security and privacy.  But, because it was designed to be practical and because theory in this area is still relatively young, the formal analysis of its privacy properties has been limited.

We perform our analysis using a simple black-box model in the UC framework. We justify this model by showing that it information-theoretically provides the same anonymity as the onion routing protocol formalized by \citeN{FC07}. Furthermore, it should lend itself to the analysis of other anonymity protocols expressed within the UC framework.  We investigate the relationship anonymity of users and their destinations in this model and measure it using the probability that the adversary assigns to the correct destination of a given user after observing the network.

Our anonymity analysis first shows that a simple, standard approximation to the expected value of the anonymity metric provides a lower bound on it.  Then we consider the worst-case set of user behaviors to give an upper bound on the expected value.  We show that, in the limit as the number of users grows, a user's anonymity is worst either when all other users choose destinations he is unlikely to visit, because that user becomes unique and identifiable, or when that user chooses a destination that all other users prefer, because the adversary mistakes the group's choices for the user's choice.  This worst-case anonymity with an adversary that controls a fraction $b$ of the routers is comparable to the best-case anonymity against an adversary that controls a fraction $\sqrt{b}$.

The worst case is unlikely to be the case for any users; so we investigate anonymity under a more reasonable model of user behavior suggested in the literature.  In it, users select destinations from a common Zipfian distribution.  Our results show that, in this case and in any case with a common distribution, the expected anonymity tends to the best possible, \emph{i.e.} the adversary doesn't usually gain that much knowledge from the other users' actions.

Future work includes extending this analysis to other types of anonymity (such as sender anonymity), extending it to other anonymity networks, and learning more about the belief distribution of the adversary than just its mean.  A big piece of the attack we describe is in learning the users' destination distribution, about which only a small amount of research, usually on simple models, has been done.  The speed with which an adversary can perform this stage of the attack is crucial in determining the validity of our attack model and results.

In response to analyses such as that of {\O}verlier and
Syverson~\shortcite{hs-attack06}, the current Tor design includes
entry guards by default for all circuits. Roughly, this means that,
since about January 2006, each Tor client selects its first onion
router from a small set of nodes that it randomly selects at
initialization. The rationale is that communication patterns of
individuals are what need to be protected. If an entry guard is
compromised, then the percentage of compromised circuits from that
user is much higher. But, without entry guards, it appears that whom
that user communicates with and even at what rate can be fairly
quickly learned by an adversary owning a modest percentage of the Tor
nodes anyway. If no entry guard is compromised, then no circuits from
that user will ever be linked to him. However, if a user expects to be
targeted by a network adversary that can control nodes, he can expect
his entry guards ultimately to be attacked and possibly
compromised. If the destinations he chooses that are most sensitive
are rarely contacted, he may thus be better off choosing first nodes
at random. How can we know which is better? Extending our analysis to
include entry guards will allow us to answer or at least further
illuminate this question.

Our model also assumes that client connections to the network are such
that the initial onion router in a circuit can tell that it is initial
for that circuit. This is true for the overwhelming majority of
traffic on the Tor network today, because most users run clients that
are not also onion routers. However, for circuits that are initiated
at a node that runs an onion router, a first node cannot easily tell whether it is the first node or the second---without resorting
to other attacks of unknown efficacy, \emph{e.g.}, monitoring latency of
traffic moving in each direction in response to traffic moving in the
other direction. Thus, that initiating edge of the black box is
essentially fuzzy. Indeed, this was originally the only intended
configuration of onion routing for this reason \cite{onion-routing:ih96}.  The
addition of clients that do not also function as routers was a later
innovation that was added to increase usability and flexibility
\cite{onion-routing:jsac98,onion-discex00}.  Similarly, peer-to-peer designs such as Crowds
\cite{crowds:tissec} and Tarzan \cite{tarzan:ccs02} derive their security even more
strongly from the inability of the first node to know whether it is
first or not.  Thus, extending our model and analysis to this case
will make it still more broadly applicable.

\appendix
\section{Appendix}
Let $\tilde{f}$ be as defined in Lemma~\ref{ef}.
\begin{lemma} \label{lem:d2f}
$D^2_{s_{d_i}} \tilde{f} \ge 0$.
\end{lemma}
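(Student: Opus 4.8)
The plan is to establish the lemma as a direct convexity computation, treating $s_{d_i}$ as a continuous variable on $[0,\nu]$ and regarding $\nu$, $m$, $p^u_{d_i}$, $p^u_{d_j}$, $\beta$ as fixed nonnegative parameters with $0\le m\le\nu$. Reading off the closed form of $\tilde f$ displayed in the proof of Lemma~\ref{ef}, the factor $\binom{\nu}{m}>0$ is constant in $s_{d_i}$, so I would factor it out and write the remaining rational function as $N(s_{d_i})/D(s_{d_i})$, where both numerator $N$ and denominator $D$ are (inspecting the formula) quadratic polynomials in $s_{d_i}$.

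Applying the quotient rule twice gives
\[
D^2_{s_{d_i}}\frac{N}{D} = \frac{\bigl(N''D - N D''\bigr)D - 2D'\bigl(N'D - N D'\bigr)}{D^3}.
\]
First I would dispatch the denominator: $D$ is a sum of three products, each consisting of one of the nonnegative parameters $p^u_{d_i},p^u_{d_j},\beta$ multiplied by linear factors in $s_{d_i}$ that are themselves nonnegative on the admissible region, since $\nu-s_{d_i}=s_{d_j}\ge 0$, $\nu-m\ge 0$, $(s_{d_i}+1)\nu-m\,s_{d_i}=\nu+s_{d_i}(\nu-m)\ge\nu$, and $(\nu-s_{d_i})(\nu-m)+\nu\ge\nu$. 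As $D$ is (up to positive factors) the genuine normalizing denominator of a posterior probability, it is strictly positive, so $D^3>0$ and the sign of the second derivative is that of the numerator.

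It therefore suffices to show the numerator, a polynomial of degree four in $s_{d_i}$, is nonnegative on $\{0\le s_{d_i}\le\nu,\ 0\le m\le\nu\}$, and \emph{this is the only real obstacle}: there is no visible structural shortcut, so I expect to expand and reorganize the expression rather than argue abstractly. The strategy I would follow is to collect the numerator by total degree in the parameters $(p^u_{d_i},p^u_{d_j},\beta)$ and show that each resulting coefficient is a polynomial in $s_{d_i},\nu,m$ that is manifestly nonnegative on the admissible region, ideally by rewriting it as a sum of monomials in the nonnegative quantities $s_{d_i}$, $s_{d_j}=\nu-s_{d_i}$, $\nu-m$, $m$, and $1$. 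Because the expansion is large, I would carry it out with a computer-algebra system and then exhibit the final grouping; the substance of the lemma is precisely that such a manifestly nonnegative grouping exists, which certifies $D^2_{s_{d_i}}\tilde f\ge 0$ and hence the convexity of $\tilde f$ in $s_{d_i}$ invoked in Lemma~\ref{ef}.
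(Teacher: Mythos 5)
Your overall route coincides with the paper's: drop the constant binomial factor, write $\tilde f$ as a ratio of quadratics in $s_{d_i}$, differentiate twice with the quotient rule, observe that the denominator is the cube of a strictly positive quantity, and reduce the lemma to the nonnegativity of a polynomial numerator on the admissible region. The gap sits precisely in the step you defer to the computer algebra system, and it is the step that carries the entire content of the lemma. The certificate you propose --- collect the numerator by monomials in $(p^u_{d_i},p^u_{d_j},\beta)$ and exhibit each coefficient as a nonnegative combination of monomials in $s_{d_i}$, $s_{d_j}=\nu-s_{d_i}$, $\nu-m$, $m$, and $1$ --- provably does not exist. Writing $\mu=\nu-m$, the coefficient of the cross term $p^u_{d_i}p^u_{d_j}$ in the expanded numerator is strictly negative on part of the region: at $\mu=0$ it reduces to a negative multiple of $(s_{d_i}+s_{d_j})^6$. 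Since a representation of the numerator as a nonnegative combination of monomials in those quantities would force the $p^u_{d_i}p^u_{d_j}$ coefficient to be nonnegative everywhere on the nonnegative orthant, no grouping of the kind you describe can succeed; the nonnegativity only emerges after the $(p^u_{d_i})^2$, $(p^u_{d_j})^2$, and $p^u_{d_i}p^u_{d_j}$ contributions are recombined.

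Concretely, what is still needed (and what the paper supplies) is the following. After splitting off the terms carrying a factor of $\beta$, whose coefficients are visibly nonnegative since $\nu\ge m$, regroup the rest by powers of $s_{d_i}$ and $s_{d_j}$, so that each coefficient becomes a quadratic form in $(p^u_{d_i},p^u_{d_j})$ with coefficients depending on $\mu$. The $s_{d_i}^3$ and $s_{d_j}^3$ forms are handled by treating $\mu=0$ (where they collapse to $(p^u_{d_i}-p^u_{d_j})^2$) and $\mu\ge 1$ (where every coefficient is nonnegative) separately. The $s_{d_i}^2 s_{d_j}$ and $s_{d_i}s_{d_j}^2$ forms retain a negative cross coefficient $-p^u_{d_i}p^u_{d_j}(2+\mu)$ for all $\mu$, and their nonnegativity requires an additional argument: substitute $p^u_{d_j}=\zeta-p^u_{d_i}$ and minimize the resulting convex quadratic in $p^u_{d_i}$, obtaining the minimum value $\zeta^2\mu(8+11\mu+4\mu^2)/(4(2+\mu)^2)\ge 0$ (equivalently, check the discriminant of the quadratic form). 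Without these regrouping and completing-the-square steps, the expansion you ask the computer algebra system for will return coefficients that are not manifestly nonnegative, and the proof does not close.
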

\begin{proof}
Let $i=s_{d_i}$ and $\mu=\nu-m$ for simplicity.  Then
\begin{equation*}
\tilde{f} = \frac{(\nu +i \mu ) (\nu +(\nu -i) \mu )}{p^u_{d_j} \nu (\nu +i \mu ) (1-i+\nu )+(1+i) p^u_{d_i} \nu  (\nu -i \mu +\nu  \mu )+\beta  (\nu +i \mu ) (\nu -i \mu +\nu  \mu )}.
\end{equation*}
The second derivative of $\tilde{f}$ can be expressed as
\begin{equation*}
D^2_{s_{d_i}} \tilde{f} = \frac{N}{D},
\end{equation*}
where
\begin{align*}
N = &-\bigg( (2 (i+j) (-i-j+\mu ) \\
&\qquad \left(-i^3 (p^u_{d_i}-p^u_{d_j}) \mu ^3 ((i+j) (p^u_{d_i}+p^u_{d_j})+\beta  \mu )+ \right. \\
&\qquad 3 i^2 (i+j) \mu^2 (p^u_{d_i}+p^u_{d_j}+p^u_{d_i} \mu ) ((i+j) (p^u_{d_i}+p^u_{d_j})+\beta  \mu )-\\
&\qquad 3 i (i+j)^2 \mu  ((i+j) (p^u_{d_i}+p^u_{d_j})+\beta  \mu ) \left(-p^u_{d_j}+p^u_{d_i} (1+\mu )^2\right)+\\
&\qquad (i+j)^3 \left((i+j) (p^u_{d_i})^2 (1+\mu )^3+p^u_{d_j} ((i+j) p^u_{d_j}+\beta  \mu )+ \right. \\
&\qquad\qquad \left. \left. p^u_{d_i} \left(\beta  \mu  (1+\mu )^3+p^u_{d_j} (2+\mu ) \left(-i-j+2 \mu +(1+i+j) \mu^2\right)\right)\right)\right)\bigg)
\end{align*}
and
\begin{multline*}
D = \bigg((i+j)^2 (p^u_{d_i}+i p^u_{d_i}+p^u_{d_j}+j p^u_{d_j}+\beta )+\\
(i+j) (i (p^u_{d_j}+\beta )+j (p^u_{d_i}+i p^u_{d_i}+i p^u_{d_j}+\beta )) \mu +i j \beta  \mu^2\bigg)^3,
\end{multline*}
substituting $(i+j)$ for $\nu$.  $D$ is clearly positive.  Therefore we must just show that $N$ is non-negative.

We collect terms in $N$ by the coefficients $p_{d_i}$, $p_{d_j}$, and $\beta$:
\begin{align*}
&2 p_{d_j} \beta (i+j)(i+j-\mu ) \mu  (i+j+i \mu )^3 + \\
&2 p_{d_i} \beta (i+j) (i+j-\mu ) \mu  (i+j+j \mu )^3 +\\
&2 (p^u_{d_j})^2 (i+j)^2 (i+j-\mu )(i+j+i \mu )^3 +\\
&2 (p^u_{d_i})^2 (i+j)^2 (i+j-\mu )(i+j+j \mu )^3 +\\
&2 p^u_{d_i} p^u_{d_j} (i+j)^2 (i+j-\mu ) (i+j)(2+\mu )\cdot \\
&\qquad \left(i^2 \left(-1+\mu ^2\right)+j \left(\mu  (2+\mu )+j \left(-1+\mu ^2\right)\right)+i \left(\mu  (2+\mu )-j \left(2+\mu ^2\right)\right)\right).
\end{align*}
The coefficients of the terms in $p_{d_i}$ and $p_{d_j}$ are clearly positive because $i+j = \nu \ge \nu-m = \mu$.

If we collect the remaining terms by $i$ and $j$, we get
\begin{align*}
&i^3 \left((p^u_{d_i})^2+(p^u_{d_j})^2 (1+\mu )^3+p^u_{d_i} p^u_{d_j} \left(-2-\mu +2 \mu ^2+\mu ^3\right)\right) +\\
&j^3 \left((p^u_{d_j})^2+(p^u_{d_i})^2 (1+\mu )^3+p^u_{d_i} p^u_{d_j} \left(-2-\mu +2 \mu ^2+\mu ^3\right)\right) +\\
&i^2 p^u_{d_i} p^u_{d_j} \mu  (2+\mu )^2 +\\
&j^2 p^u_{d_i} p^u_{d_j} \mu  (2+\mu )^2 +\\
&2 i j p^u_{d_i} p^u_{d_j} \mu  (2+\mu )^2 +\\
&3 i^2 j \left((p^u_{d_i})^2 (1+\mu )+(p^u_{d_j})^2 (1+\mu )^2-p^u_{d_i} p^u_{d_j} (2+\mu )\right) +\\
&3 i j^2 \left((p^u_{d_j})^2 (1+\mu )+(p^u_{d_i})^2 (1+\mu )^2-p^u_{d_i} p^u_{d_j} (2+\mu )\right).
\end{align*}
The coefficients for the $i^3$ and $j^3$ terms are clearly non-negative when $\mu \ge 1$.  When $\mu=0$, observe that the coefficients become $(p^u_{d_i}-p^u_{d_j})^2\ge 0$.  The coefficients for the $i^2$, $j^2$, and $ij$ terms are also clearly non-negative.

To show that the $i^2 j$ term is non-negative, we use the fact that $p_{d_i}$ and $p_{d_j}$ are probabilities that sum to at most one.  Let $p_{d_j} = \zeta-p_{d_i}$, $0\le \zeta \le 1$.  Then the coefficient of $i^2 j$ becomes a quadratic function of $p_{d_i}$ with positive second derivative.  Its minimum is at
\begin{equation*}
p_{d_i} = \frac{4 \zeta +5 \zeta  \mu +2 \zeta  \mu ^2}{2 (2+\mu )^2}.
\end{equation*}
The coefficient evaluated at this point is
\begin{equation*}
\frac{\zeta ^2 \mu  \left(8+11 \mu +4 \mu ^2\right)}{4 (2+\mu )^2},
\end{equation*}
which is non-negative.  Therefore, the whole $i^2 j$ term is non-negative.

Similarly, for the $i j^2$ term, we look at its coefficient as a function of $p_{d_i}$ with $p_{d_j} = \zeta-p_{d_i}$.  It is also a quadratic function with positive second derivative.  Its minimum is found at
\begin{equation*}
\frac{4 \zeta +3 \zeta  \mu }{2 (2+\mu )^2}.
\end{equation*}
The coefficient evaluated at this point is
\begin{equation*}
\frac{\zeta ^2 \mu  (8+\mu  (11+4 \mu ))}{4 (2+\mu )^2},
\end{equation*}
which is non-negative.  Therefore, the whole $i j^2$ term is non-negative.  This implies that $N$ is non-negative, and thus that $D^2_{s_{d_i}} \tilde{f}$ is non-negative.
\end{proof}

\bibliographystyle{acmsmall}
\bibliography{tor_prob_tissec}
\end{document}